\author{Margarita Capretto}
\affiliation{\institution{IMDEA Software Institute}\country{Spain}}
\affiliation{\institution{Universidad Politécnica de Madrid}\country{Spain}}
\email{margarita.capretto@imdea.org}
\author{Martín Ceresa}
\affiliation{\institution{IMDEA Software Institute}\country{Spain}}
\email{martin.ceresa@imdea.org@imdea.org}
\author{Antonio {Fernández Anta}}
\affiliation{\institution{IMDEA Software Institute}\country{Spain}}
\affiliation{\institution{IMDEA Networks Institute}\country{Spain}}
\email{antonio.fernandez@imdea.org}
\author{Pedro Moreno-Sánchez}
\affiliation{\institution{IMDEA Software Institute}\country{Spain}}
\affiliation{\institution{VISA Research}\country{Spain}}
\affiliation{\institution{MPI-SP}\country{Germany}}
\email{pedro.moreno@imdea.org@imdea.org}
\author{César Sánchez}
\affiliation{\institution{IMDEA Software Institute}\country{Spain}}
\email{cesar.sanchez@imdea.org@imdea.org}
\newenvironment{megaalgorithm}[1][htb]{%
    \renewcommand{\ALG@name}{Alg.}
   \begin{algorithm}[#1]%
  }{\end{algorithm}}
 \definecolor{orcidlogocol}{HTML}{A6CE39}
\tikzset{
    orcidlogo/.pic={
        \fill[orcidlogocol] svg{M256,128c0,70.7-57.3,128-128,128C57.3,256,0,198.7,0,128C0,57.3,57.3,0,128,0C198.7,0,256,57.3,256,128z};
        \fill[white] svg{M86.3,186.2H70.9V79.1h15.4v48.4V186.2z}
        svg{M108.9,79.1h41.6c39.6,0,57,28.3,57,53.6c0,27.5-21.5,53.6-56.8,53.6h-41.8V79.1z M124.3,172.4h24.5c34.9,0,42.9-26.5,42.9-39.7c0-21.5-13.7-39.7-43.7-39.7h-23.7V172.4z}
        svg{M88.7,56.8c0,5.5-4.5,10.1-10.1,10.1c-5.6,0-10.1-4.6-10.1-10.1c0-5.6,4.5-10.1,10.1-10.1C84.2,46.7,88.7,51.3,88.7,56.8z};
    }
}
\newcommand\orcidicon[1]{\href{https://orcid.org/#1}{\mbox{\scalerel*{
                \begin{tikzpicture}[yscale=-1,transform shape]
                \pic{orcidlogo};
                \end{tikzpicture}
            }{|}}}}
\newcommand\imarga[1]{\todo[inline]{Marga: {#1}}}
\newtheorem{property}{Property}
\newtheorem{lemma}{Lemma}
 \newtheorem{theorem}{Theorem}
\newcommand{\arranger}{Arranger\xspace}
\newcommand{\hash}{\mathcal{H}}
\newcommand{\repeattheorem}[1]{%
  \begingroup
  \renewcommand{\thetheorem}{\ref{#1}}%
  \expandafter\expandafter\expandafter\theorem
  \csname reptheorem@#1\endcsname
  \endtheorem
  \endgroup
}
\xdef\csname reptheorem@#1\endcsname{%
    \unexpanded\expandafter{\BODY}%
  }%
\unskip\label{#1}\endtheorem
\xdef\csname replemma@#1\endcsname{%
    \unexpanded\expandafter{\BODY}%
  }%
\unskip\label{#1}\endlemma
\newcommand{\repeatatlemma}[1]{%
  \begingroup
  \renewcommand{\thelemma}{\ref{#1}}%
  \expandafter\expandafter\expandafter\lemma
  \csname replemma@#1\endcsname
  \endlemma
  \endgroup
}
\xdef\csname repproposition@#1\endcsname{%
    \unexpanded\expandafter{\BODY}%
  }%
\unskip\label{#1}\endproposition
\newcommand{\repeatatproposition}[1]{%
  \begingroup
  \renewcommand{\theproposition}{\arabic{proposition}}%
  \expandafter\expandafter\expandafter\proposition
  \csname repprop@#1\endcsname
  \endproposition
  \endgroup
}
\newcommand{\PROP}[1]{\textbf{\textit{#1}}\xspace}
\newcommand{\PrAvailability}{\PROP{Availability}}
\newcommand{\PrTermination}{\PROP{Termination}}
\newcommand{\PrValidity}{\PROP{Validity}}
\newcommand{\PROPsub}[2]{\ensuremath{\bf \textit{\textbf{#1}}_{#2}}\xspace}
\newcommand{\PrIntegrityOne}{\PROPsub{Integrity}{1}}
\newcommand{\PrIntegrityTwo}{\PROPsub{Integrity}{2}}
\newcommand{\PrUniqueBatch}{\PROP{Unique Batch}}
\newcommand{\PrLegality}{\PROP{Legality}}
\newcommand{\PrSBCTermination}{\PROP{SBC-Termination}}
\newcommand{\PrSBCAgreement}{\PROP{SBC-Agreement}}
\newcommand{\PrSBCValidity}{\PROP{SBC-Validity}}
\newcommand{\PrSBCCensorshipResistance}{\PROP{SBC-CensorshipResistance}}
\newcommand{\PrSBCIntegrity}{\PROP{SBC-Integrity}}
\newcommand{\Rtt}{\texttt{R}\xspace}
\newcommand{\LdosTPS}{$200$\xspace}
\reservestyle{\fpm}{\text}
\reservestyle{\variables}{\text}
\reservestyle{\setops}{\textsc}
\reservestyle{\structs}{\text}
\reservestyle{\stmt}{\textbf}
\reservestyle{\mathfunc}{\mathbf\mathsf}
\reservestyle{\messages}{\texttt}
\reservestyle{\api}{\texttt}
\reservestyle{\servers}{\texttt}
\reservestyle{\schain}{\texttt}
\reservestyle{\setvars}{\textit}
\reservestyle{\event}{\texttt}
\definecolor{verylightgray}{rgb}{.97,.97,.97}
\lstdefinelanguage{Solidity}{
	keywords=[1]{anonymous, assembly, assert, balance, break, call, callcode, case, catch, class, constant, continue, constructor, contract, debugger, default, delegatecall, delete, do, else, emit, event, experimental, export, external, false, finally, for, function, gas, if, implements, import, in, indexed, instanceof, interface, internal, is, length, library, log0, log1, log2, log3, log4, memory, modifier, new, payable, pragma, private, protected, public, pure, push, require, return, returns, revert, selfdestruct, send, solidity, storage, struct, suicide, super, switch, then, this, throw, transfer, true, try, typeof, using, value, view, while, with, addmod, ecrecover, keccak256, mulmod, ripemd160, sha256, sha3}, 
	keywordstyle=[1]\color{blue}\bfseries,
	keywords=[2]{set,address, bool, byte, bytes, bytes1, bytes2, bytes3, bytes4, bytes5, bytes6, bytes7, bytes8, bytes9, bytes10, bytes11, bytes12, bytes13, bytes14, bytes15, bytes16, bytes17, bytes18, bytes19, bytes20, bytes21, bytes22, bytes23, bytes24, bytes25, bytes26, bytes27, bytes28, bytes29, bytes30, bytes31, bytes32, enum, int, int8, int16, int24, int32, int40, int48, int56, int64, int72, int80, int88, int96, int104, int112, int120, int128, int136, int144, int152, int160, int168, int176, int184, int192, int200, int208, int216, int224, int232, int240, int248, int256, mapping, string, elem, uint, uint8, uint16, uint24, uint32, uint40, uint48, uint56, uint64, uint72, uint80, uint88, uint96, uint104, uint112, uint120, uint128, uint136, uint144, uint152, uint160, uint168, uint176, uint184, uint192, uint200, uint208, uint216, uint224, uint232, uint240, uint248, uint256, var, void, ether, finney, szabo, wei, days, hours, minutes, seconds, weeks, years},	
	keywordstyle=[2]\color{teal}\bfseries,
	keywords=[3]{block, blockhash, coinbase, difficulty, gaslimit, number, timestamp, msg, gas, sender, sig, value, now, tx, gasprice, origin, add, epochinc, get, setminus, emptyset},	
	keywordstyle=[3]\color{violet}\bfseries,
	identifierstyle=\color{black},
	sensitive=false,
	comment=[l]{//},
	morecomment=[s]{/*}{*/},
	commentstyle=\color{gray}\ttfamily,
	stringstyle=\color{red}\ttfamily,
	morestring=[b]',
	morestring=[b]"
}
\begin{document}
\title{ \bf A Decentralized Sequencer and Data Availability Committee
  for Rollups Using Set Consensus}

\begin{abstract}
  %
  %
  Blockchains face a scalability challenge due to the intrinsic
  throughput limitations of consensus protocols and the limitation in
  block sizes due to decentralization.
  An alternative to improve the number of transactions per second is
  to use Layer 2 (L2) rollups.
  L2s perform most computations offchain using blockchains (L1)
  minimally under-the-hood to guarantee correctness.
  A \emph{sequencer} receives offchain L2 transaction requests,
  batches them, and commits compressed or hashed batches to L1.
  Hashing offers much better compression but requires a data
  availability committee (DAC) to translate hashes back into their
  corresponding batches.
  Current L2s consist of a centralized sequencer which receives and
  serializes all transactions and an optional DAC.
  Centralized sequencers can undesirably influence L2s evolution.
  
  We propose in this paper a fully decentralized implementation of
  a service that combines (1) a sequencer that posts hashes to the L1
  blockchain and (2) the data availability committee that reverses the
  hashes.
  We call the resulting service a (decentralized) \emph{arranger}.
  Our decentralized arranger is based on \emph{Set Byzantine
    Consensus} (SBC), a service where participants can propose sets of
  values and consensus is reached on a subset of the union of the
  values proposed.
  We extend SBC for our fully decentralized arranger.

  Our main contributions are (1) a formal definition of arrangers;
  (2) two implementations, one with a centralized sequencer and another
  with a fully decentralized algorithm, with their proof of
  correctness; and (3) empirical evidence that our solution scales
  by implementing all building blocks necessary to implement a correct
  server.
  \end{abstract}


\begin{CCSXML}
<ccs2012>
   <concept>
       <concept_id>10002944.10011123.10011130</concept_id>
       <concept_desc>General and reference~Evaluation</concept_desc>
       <concept_significance>300</concept_significance>
       </concept>
   <concept>
       <concept_id>10010520.10010575</concept_id>
       <concept_desc>Computer systems organization~Dependable and fault-tolerant systems and networks</concept_desc>
       <concept_significance>500</concept_significance>
       </concept>
   <concept>
       <concept_id>10003752.10003809.10010172</concept_id>
       <concept_desc>Theory of computation~Distributed algorithms</concept_desc>
       <concept_significance>500</concept_significance>
       </concept>
 </ccs2012>
\end{CCSXML}

\ccsdesc[300]{General and reference~Evaluation}
\ccsdesc[500]{Computer systems organization~Dependable and fault-tolerant systems and networks}
\ccsdesc[500]{Theory of computation~Distributed algorithms}

\keywords{Layer 2, Rollups, Sequencer, Data Availability,
  Decentralization, Byzantine Fault-Tolerant}

\maketitle

\section{Introduction}
\label{sec:intro}

%
%
\emph{Distributed ledgers} (also known as \emph{blockchains}) were
first proposed by Nakamoto in 2009~\cite{nakamoto06bitcoin} in the
implementation of Bitcoin, as a method to eliminate trustable third
parties in electronic payment systems.
A current major obstacle for a faster widespread adoption of
blockchain technologies in some application areas is their limited
scalability, due to the limited throughput inherent to Byzantine
consensus
algorithms~\cite{Croman2016ScalingDecentralizedBlockchain,Tyagi@BlockchainScalabilitySol},
and the limitation in the block size due to the desire for
decentralization.
For example, Ethereum~\cite{wood2014ethereum}, one of the most popular
blockchains, is limited to less than 4 blocks per minute, each
containing less than two thousand transactions.

%
%
Layer 2 (L2) rollups provide a faster alternative to blockchains
while still offering the same interface in terms of smart contract
programming and user interaction.
%
%
L2 rollups perform as much computation as possible offchain with the
minimal blockchain interaction---in terms of the number and size of
invocations---required to guarantee a correct and trusted operation.
%
%
L2 rollups work in two phases: users inject transaction requests
communicating with a service called \emph{sequencer}, which orders and
packs the requests into batches.
The sequencer compresses batches and injects the result into an
underlying blockchain (L1).
Once batches are posted to L1, the transaction order inside batches is determined.
Then, the effects of executing transactions are computed
\emph{offchain} by agents called State Transition Functions~(STFs),
which publish the resulting state of the L2 blockchain in L1.
STFs are independent parties that compute L2 blocks from batches and
post them to L1.

There are two main categories of L2 rollups:
\begin{itemize}
\item \emph{ZK-Rollups:} the STFs post zero-knowledge proofs that
  encode the correctness of the transaction batch effects, which are
  verified by the L1 contract that receives the L2 block.
\item \emph{Optimistic Rollups:} STFs post L2 blocks which are
  optimistically assumed to be correct, delegating block validation on
  fraud-proof mechanisms.
\end{itemize}
The most prominent Optimistic Rollups based on their market
share~\cite{l2beat} are Arbitrum One~\cite{ArbitrumNitro}, Optimism
mainnet~\cite{optimism}, and Base~\cite{base}.
Popular ZK-Rollups include Starknet~\cite{starknet}, zkSync
Era~\cite{zksyncera}, and Linea~\cite{linea}.
Fig.~\ref{fig:optimistic-rollups} shows the architecture of ZK Rollups
and Optimistic Rollups.
\begin{figure}[h]
  \centering
  \begin{tabular}{c}
    \includegraphics[scale=0.38]{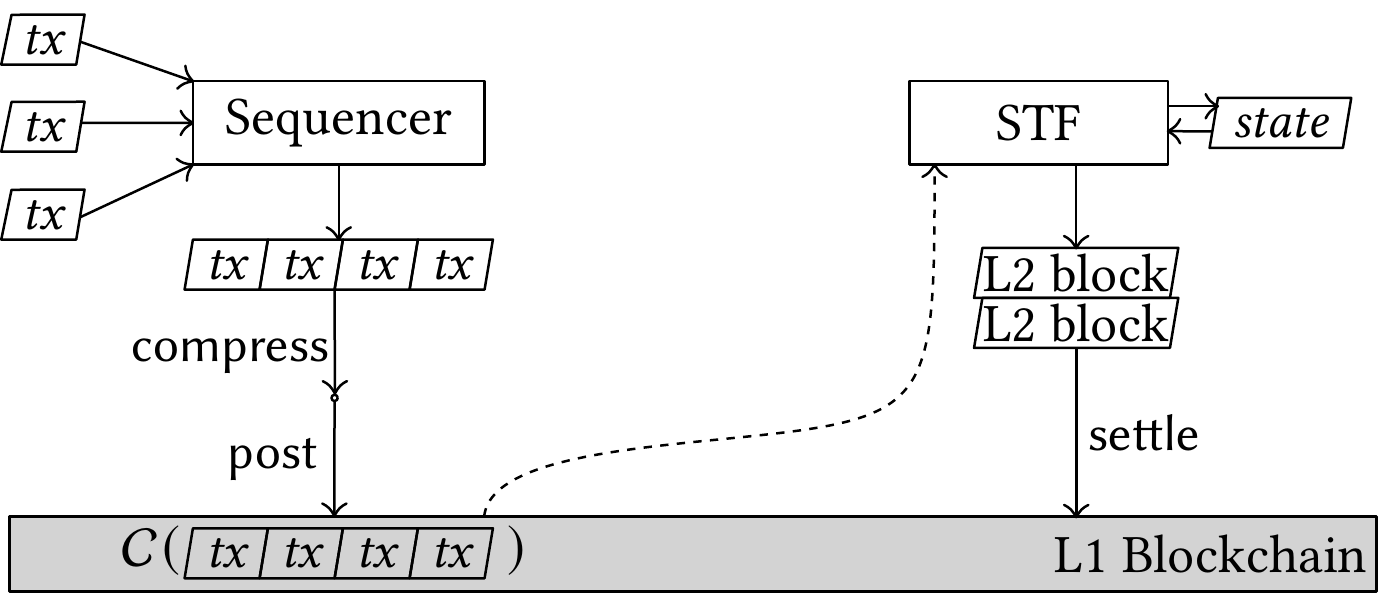}
  \end{tabular}
  \caption{Optimistic and ZK Rollups.}
    \label{fig:optimistic-rollups}
\end{figure}
 
The execution part of L2 rollups can be decentralized\footnote{In the
  sense of \emph{architectural} and \emph{political} decentralization
  as defined in~\cite{buterin17meaning}.} as the role of STF can be open
to many parties.
However, the ordering of transaction requests in most L2 rollups is
decided by a single centralized sequencer node.
These L2 rollups are then non-fully-decentralized, since the sequencer
is a ``\emph{decentralization bottleneck},'' which results in a single
point of trust and failure.
This decentralization bottleneck poses risks such as transaction
censorship or even liveness (fully halting the L2) because the
sequencer can ignore transactions or users, fail to submit batches or
provide incorrect data that does not correspond to batches of valid
transaction requests.
While the limitations of centralized sequencers are known, the
development of decentralized sequencers is still in the early
stages~\cite{motepalli2023sok}.
\emph{We tackle in this paper the problem of decentralized sequencers.}

%
To increase scalability even further, the sequencer in some modern L2
rollups posts hashes of batches---instead of compressed
batches---dramatically reducing the size of the L1 blockchain
interaction.
Using hashes to encode batches requires an additional data
service---called \emph{data availability committee} (DAC)---to
translate hashes into their corresponding batches.
If either the sequencer or the DAC is centralized the solution is
still not a fully decentralized L2 rollup.

ZK-rollups that rely on DACs are known as \emph{Validiums}, which
include Immutable X~\cite{immutablex}, Sophon~\cite{sophon}, and X
Layer~\cite{xlayer}.
Optimistic rollups that use DACs are called \emph{Optimiums}, which
include Mantle~\cite{mantle}, Metis~\cite{metis} and
Fraxtal~\cite{fraxtal}.\footnote{A complete list of Ethereum scaling
  solutions and their current state of decentralization can be found
  in~\cite{l2beat}.}
Fig.~\ref{fig:optimiums} shows the architecture of Optimiums and Validiums.

\begin{figure}[h]
  \centering
  \begin{tabular}{c}
     \includegraphics[scale=0.38]{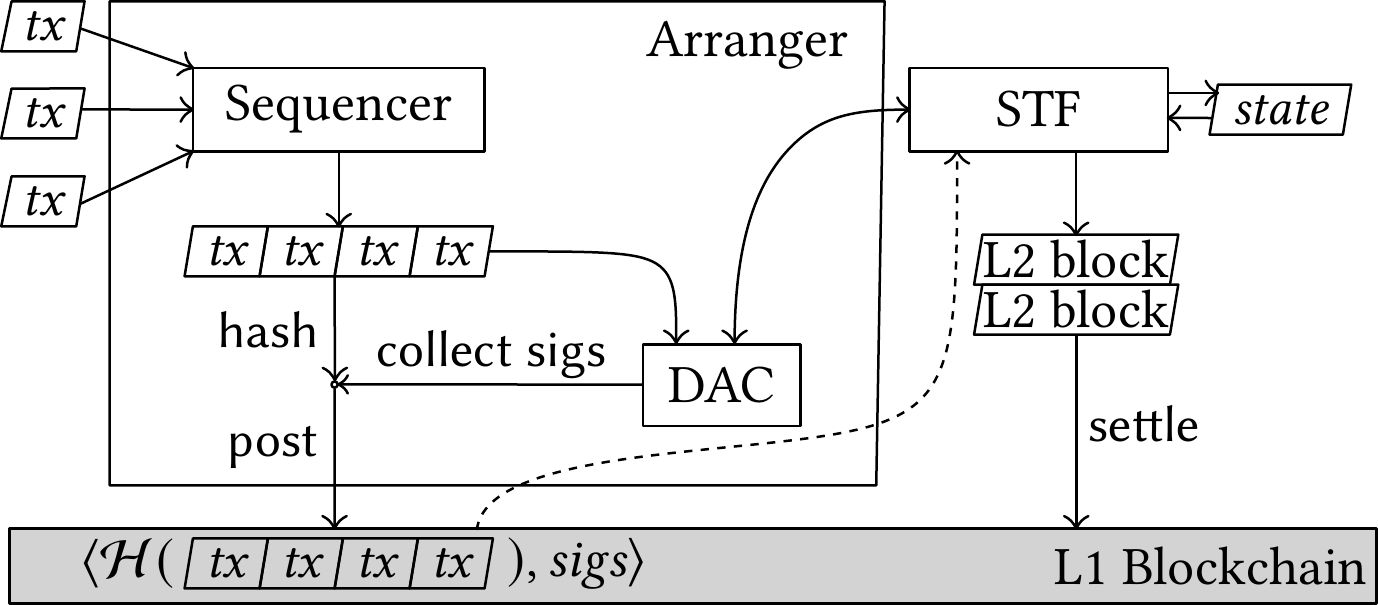}
  \end{tabular}
  \caption{Optimiums and Validiums.}
  \label{fig:optimiums}
\end{figure}

To simplify notation, in the rest of the paper, we use simply L2 to
refer to Optimistic Rollups and ZK Rollups that post batches of
transaction requests as hashes and have a DAC, i.e. Validiums and
Optimiums.
We use the term \emph{arranger} to refer to the combined service
formed by the sequencer and DAC.
Arrangers receive transaction requests, batch them, commit these
batches hashed to L1 (as sequencers do), and can translate the hashes
back into batches (as DACs do).
Arrangers do not execute transactions.

\vspace{0.5em}
\noindent\textbf{Problem.}
%
\emph{To implement a fully decentralized arranger}.

\todo[inline]{Pedro: Before the text mentions that there are some decentralized sequencers but in early stage. Why having a decentralized sequencer is a technical problem? What are the technical issues? (e.g., are we missing security definitions? off-the-shelf approaches like implementing the sequencer's functionality as a multi-party computation would be inefficient. ) }
\vspace{0.5em}
\noindent\textbf{Solution overview.}
%
%
We adopt an open permissioned model~\cite{Crain2021RedBelly} where
permissionless L2 users can issue transaction requests to the
permissioned arranger servers (which we call \emph{replicas}).
%
This model can also be adapted to a permissionless setting with
committee sortition~\cite{gilad2017algorand} without significant
modifications.
We consider a Byzantine failure model~\cite{Lamport1982Byzantine}, in
which there are two types of replicas: \emph{Byzantine}, that behave
arbitrarily, and \emph{honest} replicas, which follow the protocol.
We rely on algorithms whose trust is in the combined power of a
collection of distributed replicas where correctness requires more
than two-thirds of the replicas to run a correct version of the
protocol.

Our main building block is Set Byzantine Consensus
(SBC)~\cite{Crain2021RedBelly}, a Byzantine-resilient protocol where
replicas agree on a set of elements, instead of single elements.
Using SBC, users can inject transaction requests through any replica.
The requests remain unordered until set consensus is executed.
In the consensus phase, each replica proposes a set and replicas
eventually agree on a set of transaction requests, which is guaranteed
to be a sub-set of the union of the sets proposed by all replicas.
We use the consented SBC sets as batches of L2 transaction requests.
In our solution, an honest arranger replica includes as a building
block an honest SBC replica.
Therefore, honest arranger replicas agree on each batch and can
locally compute its hash.
Replicas also post signed hashes to L1 with the guarantee that all
honest replicas can provide, for example to an STFs, the content of a
batch given its hash.

Implementations of
SBC~\cite{Crain2021RedBelly,capretto24improving,ranchalPedrosa2024ZLB}
are reported to be two orders of magnitude faster than binary
consensus (in the volume of transactions processed) by (1) exploiting
the use of only one consensus execution to commit a set of elements,
and (2) the independent validation of the elements in the sets (which
only require checking the validity of request and not its effects).
This enables a more efficient distributed protocol to disseminate
transactions requests, and a much faster validation of each SBC set
compared to validating blockchain blocks which requires executing all
its transactions in order.

It is known that Byzantine-tolerant distributed algorithms do not
compose well in terms of correctness and
scalability~\cite{capretto24improving}, partly because these
algorithms implement defensive mechanisms against misbehaving clients
(local or remote).
Consequently, using an off-the-shelf Byzantine solution as a building
block composed with other building blocks as clients results in a poor
performance.
Instead, we implement here an arranger by combining in each replica a
sequencer and a DAC member, so an honest replica includes an honest
SBC replica and a cooperating honest DAC implementation that can trust
each other.
Our decentralized arranger extends SBC to provide cryptographic
certificates and perform reverse translation of hashes.

\paragraph{\textbf{Contributions.}}
In summary, the contributions of this paper are:

\begin{enumerate}[(1)]
\item A rigorous definition of L2 arrangers;
\item Two implementations of arrangers with their proof of
  correctness, where the more advanced is a fully decentralized
  arranger based on Set-Byzantine-Consensus;
\item An empirical evaluation of all building blocks showing that
  replicas can be efficiently implemented.
\end{enumerate}

\paragraph{\textbf{Structure.}} The rest of the paper is organized as follows.
Section~\ref{sec:prelim} presents the preliminary definitions,
assumptions and the model of computation.
Section~\ref{sec:api} gives a rigorous definition of L2 arrangers.
Section~\ref{sec:seqDC} proposes a basic implementation of a
semi-decentralized arranger.
Section~\ref{sec:seq-decentralized} describes a fully decentralized
arranger using SBC and a proof its correctness.
Section~\ref{sec:empirical} shows empirical evidence that the fully
decentralization implementation using SBC can scale to handle the
current demand of L2s.
Section~\ref{sec:related-work} discusses the related work.
Finally, Section~\ref{sec:conclusion} concludes the paper.


\section{Definitions and Model of Computation}\label{sec:prelim}

We state now our assumptions about L1s, describe the model of
computation, and present an overview of the Set Byzantine Consensus
protocol.

\subsection{Model of Computation.}

We adopt a standard distributed system
model~\cite{Castro2002Practical,Kotla2010Zyzzyva} where messages are
delivered within a bounded but unknown time limit and clocks are
almost in synchrony, usually referred as partial
synchrony~\cite{Dwork1988Consensus}.

Our system comprises arrangers replicas and clients.
There are two types of clients:
L2 users sending L2 transaction requests to arranger replicas, and
    STFs requesting arranger replicas the translation of
    hashes posted in L1
    into batches.
Arranger replicas can be classified as either \emph{honest}, meaning
they adhere to the arranger protocol, or \emph{Byzantine}, which
behave arbitrarily~\cite{Lamport1982Byzantine}.
We assume that an upper bound \(f\) in the number of Byzantine
replicas is known.

We consider a public-key infrastructure (PKI) that associates replica
and client identities with their public keys, and that is common to
all replicas and clients.
Replicas and clients use a function \(\<sign>\) to sign elements with
their secret key.
L2 users can create \emph{valid} transaction requests and arranger
replicas cannot impersonate clients.
Valid transaction requests are those that have been correctly signed,
so arranger replicas can locally check transaction requests validity
using public-key cryptography.

Arranger replicas use a known collision-resistant hash function
\(\hash{}\) to hash batches.

\subsection{Assumptions about L1}

We assume that the L1 ensures both liveness and safety.
Specifically, while the system tolerates temporary censorship of L1
transaction requests and reordering of L1 transactions, it guarantees
that every transaction submitted to L1 is eventually processed
correctly.
The L1 includes a smart contract \<logger> that arranger replicas
use to post hashed batches of transaction requests.
The \<logger> smart contract knows the public key of all arranger
replicas and the bound \(f\) on the number of Byzantine replicas.
The \<logger> smart contract only accepts new hashes from arranger
replicas that are signed by at least \(f+1\) arranger replicas,
guaranteeing that at least one honest replica signed it.

\subsection{Set Byzantine Consensus}\label{sec:setconsensus}
Set Byzantine Consensus (SBC)~\cite{Crain2021RedBelly} is a variant of
Byzantine consensus where instead of proposing and agreeing on a
single value, SBC replicas propose sets of values and agree on a
non-empty subset of the union of the proposed sets.
This allows committing more elements per consensus instance, improving
the throughput, as empirically demonstrated in practical
implementations~\cite{Crain2021RedBelly,capretto24improving,ranchalPedrosa2024ZLB}.
Intuitively, SBC increases throughput by running many binary consensus
simultaneously to determine the inclusion of elements in proposed sets.
We use SBC to determine the set of transaction requests to be included
in a batch.

SBC replicas provide two end-points.
\begin{itemize}
\item \(\<Add>(e)\) is used to submit an element \(e\).
  Honest replicas include \(e\) in their proposed set.
  Once \(e\) is known by all honest replicas, it is guaranteed to
  eventually be included in a decided set.
\item \(\<SetDeliver>(i,E)\) notifies when the \(i\)-th
  round of SBC finishes with $E$ as the decided set.
\end{itemize}

The properties of set consensus relevant to our implementation of
arrangers in Section~\ref{sec:seq-decentralized} are:
\begin{itemize}
\item \PrSBCTermination: every honest SBC replica eventually decides a
  set of elements in each SBC round.
\item \PrSBCAgreement: in each SBC round, all honest replicas decide
  the same set.
\item \PrSBCValidity: the decided set in a given SBC round is a
  non-empty subset of the union of the proposed sets in that round and
  contains only valid elements.
\item \PrSBCCensorshipResistance: elements known and proposed by all
  honest replicas are eventually included in the set decided in some
  round.
\item \PrSBCIntegrity: no element appears in more than one decided set
  across all SBC rounds.
\end{itemize}

There are several implementations of SBC, including
Redbelly~\cite{Crain2021RedBelly}, Setchain~\cite{capretto24improving}
and ZLB~\cite{ranchalPedrosa2024ZLB}.
These implementations assume that the Byzantine replicas are less than
one third.
However, there are nuances in how these implementations ensure certain
properties. 
For example, ZLB does not mention \PrSBCCensorshipResistance, and
a minor modification is required for RedBelly and ZLB to ensure
\PrSBCIntegrity: replicas must keep a log of all elements included
in previously decided sets to detect and remove duplicates.


\section{Arranger}
\label{sec:api}

In this section, we define the concept of arranger, the service in
charge of both (1) receiving and serializing transaction requests,
packing them into batches and efficiently posting them as hashes into
L1; and (2) making the data available.
Our arranger model seamlessly fits into the model of existing
Optimiums and Validiums (see Fig.~\ref{fig:optimiums}).

\subsection{Arranger API}

Arranger replicas provide two end-points:
\begin{itemize}
\item \<add>(\<tr>) used by L2 users to submit a transaction
  request \<tr>, and
\item \<translate>($\<id>,\<h>$) used by STFs or any other external
  users to request the batch of transaction requests corresponding to
  a hash \(\<h>\) with identifier \(\<id>\).
  If identifier \(\<id>\) does not match any batch, the arranger
  returns error \<invalidId>.
  If there is a batch \(b\) with identifier \(\<id>\), the arranger
  returns \(b\) if \(b\) hashes to \(\<h>\) or error \<invalidHash>
  otherwise.
\end{itemize}

When the arranger receives enough transaction requests or a timeout is
reached, all honest arranger replicas decide a new batch $b$, order
transaction requests in \(b\), assign an identifier $\<id>$ to \(b\),
compute $h = \hash{(b)}$, and create a \emph{batch tag} $(\<id>,h)$.
Then, all honest replicas sign the new batch tag and propagate their
signatures to the other replicas.
When enough signatures are collected, a combined signature
\(\sigma\)\footnote{A combined signature not only contains the result
  of combining all signatures but also an identifier of each signer.}
is attached to the batch tag to form a \emph{signed batch tag}
$(\<id>,\<h>,\sigma)$ and the signed batch tag is posted to L1.
The \<logger> L1 smart contract accepts signed batch tags validating
only the signature of the batch.

\subsection{Arranger Properties}
\label{sec:api-properties}
We now define the properties that characterize a \emph{correct}
arranger.
These properties include all \emph{basic properties} and some safety
properties of ideal arrangers introduced informally
in~\cite{motepalli2023sok}.\footnote{None of the L2s studied
  in~\cite{motepalli2023sok} implement both a decentralized sequencer
  and a decentralized DAC.}

We introduce some definitions about signed batch tags.
A signed batch tags is called \emph{certified} when is contains at
least \(f+1\) arranger replicas signatures.
Certified batch tags are guaranteed to include at least one honest
replica signature, as the required number of signatures exceeds the
maximum number of Byzantine replicas assumed.

A certified batch tag $(\<id>,\<h>,\sigma)$ is legal if its
corresponding batch \(b\) satisfies the following properties:
\begin{itemize}
\item \PrValidity: Every transaction request in $b$ is a valid
  transaction request added by an L2 user.\footnote{A transaction
    request is valid when it is properly formed and signed by the
    originating L2 user, which can be locally verified by arranger
    replicas.}
  \item \PrIntegrityOne: No transaction request appears twice in $b$.
  \item \PrIntegrityTwo: No transaction request in $b$ appears in a
    legal batch tag previously accepted by the \<logger> smart
    contract.
  \end{itemize}

We require all certified batch tags posted by correct arrangers to be
legal.

\begin{property}[\textup{\PrLegality}]
  \label{prop:legality}
  Every certified batch tag posted by the arranger is a legal batch
  tag.
\end{property}

Arranger replicas may post multiple signed batch tags with the same
identifier because they concurrently try to post the next batch, but
the \<logger> accepts only the first one posted.
A batch tag can be part of two signed batch tags when signed by a
different subset of arranger replicas.
However, certified batch tags with the same identifier must have the
same batch and generate the same hash.
This ensures that the reorder of L1 transactions cannot influence the
evolution of the L2 blockchain.

\begin{property}[\textup{\PrUniqueBatch}]
  \label{prop:uniqueBatch}
  Let $(\<id>, \<h>_1, \sigma_1)$ and $(\<id>, \<h>_2, \sigma_2)$ be
  two certified batch tags with the same identifier $\<id>$.
  Then, \(\<h>_1 = \<h>_2\).
\end{property}

The following properties of correct arrangers prevent censorship and
guarantee data availability.

\begin{property}[\textup{\PrTermination}]
  \label{prop:basic}
  All valid transaction requests added to honest replicas eventually
  appear in a posted legal batch tag accepted by the \<logger> smart
  contract.
\end{property}
\begin{property}[\textup{\PrAvailability}]
  Every posted legal batch tag can be translated into its batch by an
  honest replica.
\end{property}
 
\PrAvailability is expressed formally as follows. Let
 $(\<id>, \<h>, \sigma)$ be a legal batch tag posted by the arranger,
 s.t. $\<h>=\hash(b)$.
 Then, some honest replica will return $b$ when requested
 $\<translate>(\<id>,\<h>)$.
This prevents halting the L2 blockchain by failing to provide
batches of transaction requests from hashes.
If a Byzantine replica returns a batch \(b' \neq b\), then $b'$ cannot
hash to \(\<h>\) because of collision resistance, and the client can
locally detect this violation by computing \(\hash(b')\).

\PrLegality, \PrUniqueBatch, and \PrAvailability are safety properties
and \PrTermination is a liveness property.
Altogether, they characterize correct arrangers.

The combination of \PrTermination, \PrUniqueBatch, and \PrAvailability
guarantee that for correct arrangers all valid transaction requests
added to honest arranger replicas are eventually executed in the L2
blockchain.
When a valid transaction request $\<tr>$ is added to an honest arranger
replica, \PrTermination ensures that $\<tr>$ will eventually be included
in a legal batch tag $(\<id>, \<h>, \sigma)$ posted in L1.
Then, when an STF requests the transaction requests in a legal batch
tag with identifier \<id>, \PrAvailability ensures that an honest
arranger replica responds with the corresponding batch of transactions
$b$.
\PrUniqueBatch guarantees that all legal batch tags with identifier
\<id> correspond to the same batch of transactions.
This means that $b$ contains $\<tr>$.
Since the STF executes the entire batch $b$, transaction $\<tr>$ is
executed.
Therefore, \emph{correct arrangers offer censorship
  resistance}.
Additionally, since STFs only process legal batch tags, \PrLegality
prevents any transaction from being executed twice (properties
\PrIntegrityOne and \PrIntegrityTwo of legal batch tags), and
guarantees that only valid transaction requests are executed (property
\PrValidity of legal batch tags).
However, the properties listed above do not guarantee any order
of transactions posted by L2 users.
%


\section{Arranger \#1: Centralized Sequencer + Decentralized DAC}
\label{sec:seqDC}

We present first a \emph{semi-decentralized} distributed
implementation of arrangers where a single replica acts as the
sequencer and the remaining replicas implement a decentralized DAC.
The centralized sequencer implements operation \<add> locally, creates
batches of transaction requests added by L2 users, communicates these
batches to all DAC replicas, collect their signatures and post legal
batch tags invoking the \<logger> smart contract in L1.
DAC replicas implement \<translate> to provide the inverse
resolution of hashes posted by the sequencer back into readable
batches of transaction requests.
Fig.~\ref{fig:arranger_seqDC} shows this semi-decentralized arranger
architecture.
\begin{figure}[h!]
    \centering
    \includegraphics[scale=0.40]{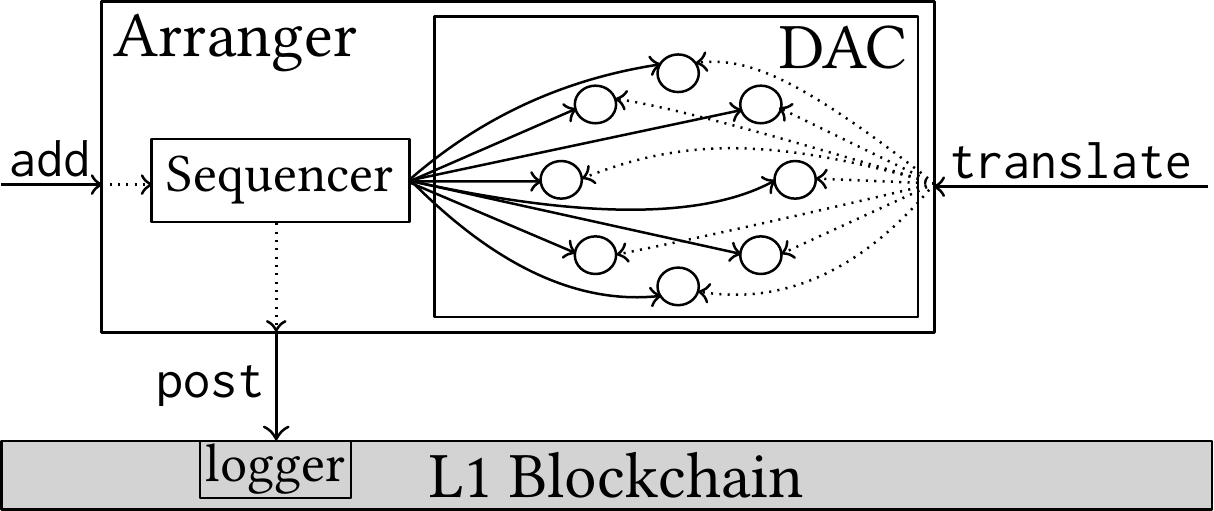}
    \caption{Arranger \#1:  centralized sequencer + decentralized DAC.}
    \label{fig:arranger_seqDC}
\end{figure}

\subsection{Implementation}
\begin{figure}[h]
  \begin{algorithm}[H]
    \caption{\small Sequencer.}
    \label{alg:sequencer}
    \small
    \begin{algorithmic}[1]
  	\State \textbf{init:} $\<allTxs> \leftarrow \emptyset $,
        $\<pendingTxs> \leftarrow \epsilon$, $\<batchId> \leftarrow 0$~\label{seq-init}
        \Function{\textnormal{\<add>}}{\textit{tr}} \label{seq-add}
          \State \<assert> $\<valid>(\<tr>) $ and $\<tr> \notin \<allTxs>$ \label{seq-add-assert}
          \State $\<pendingTxs> \leftarrow \langle \<pendingTxs>, \<tr> \rangle$ \label{seq-add-pendingTxs}
          \State $\<allTxs> \leftarrow \<allTxs> \cup \{\<tr>\}$ \label{seq-add-allTxs}
          \State \Return
        \EndFunction \label{seq-add-end}
        \Upon{\<timetopost>}\label{seq-when} 
          \State $b \leftarrow \<pendingTxs>$ \label{seq-when-save-pendingTxs}
          \State $h \leftarrow \hash(b)$ \label{seq-when-hash-b}
          \State \texttt{\textbf{DC}.Multicast}$(\<signReq>(b,\<batchId>,h))$~\label{seq-send-h}
          \State $\<wait>$ for $f+1$ responses
          $\<signResp>(\<batchId>,h,i,\sigma_i)$ with \(\sigma_i\)
          a valid signature\label{seq-wait}
          \State \<logger>.\<post>$(\<batchId>, h, \<BLSAggregate>(\bigcup_i \sigma_i))$ \label{seq-post}
         \State $\<batchId> \leftarrow \<batchId> +
         1$ \label{seq-increment-batchid}
         \State $\<pendingTxs> \leftarrow \epsilon$ \label{seq-when-clean-pendingTxs}
        \EndUpon \label{seq-when-end}
    \end{algorithmic}
  \end{algorithm}
\end{figure}


\begin{figure}[h]
  \begin{algorithm}[H]
    \caption{\small DAC replica $i$.}
    \label{alg:dc-member}
    \small
    \begin{algorithmic}[1]
  	\State  \textbf{init:} $\<hashes> \leftarrow \emptyset$
        \Function{\textnormal{\<translate>}}{$\textit{id},h$}
           \If{$(\<id>,\_)\notin \<hashes>$}
             \<return> \<invalidId>
           \Else
             \If{$(\<id>,h)\notin \<hashes>$}
                \<return> \<invalidHash>
             \Else
               ~\<return> $\<hashes>(\<id>,h)$
             \EndIf
           \EndIf  
        \EndFunction
        \Receive{$\<signReq>( b,\<id>, h)$ from Sequencer} \label{dc-member-receive}
          \State \<assert> $h == \hash(b)$ and $(\<id>,\_) \notin \<hashes>$ \label{dc-member-sign-assert}
          \State $\<hashes> \leftarrow \<hashes> \cup \{((\<id>,h),b)\}$ \label{dc-member-sign-add}
          \State \textbf{send} $\<signResp>(\<id>,h,i, \<sign>_i(\<id>,h))$ to
          Sequencer \label{dc-member-sign-send}
        \EndReceive \label{dc-member-receive-end}
    \end{algorithmic}
  \end{algorithm}
 \end{figure}


%
Alg.~\ref{alg:sequencer} and Alg.~\ref{alg:dc-member} show the
pseudo-code of the sequencer and DAC member, respectively.
The \<logger> smart contract is invoked only by the sequencer to store batch
tags using function \<post>.
For simplicity, the sequencer uses a multicast service,
\texttt{\textbf{DC}}, to send messages to all DAC members, which can
be implemented with unicast channels simply by the sequencer
contacting each DAC replica in parallel.
DAC replicas answer back to the sequencer using a single unicast
message.


Alg.~\ref{alg:sequencer} presents the sequencer pseudo-code.
This algorithm maintains a set \<allTxs> to store all valid
transaction requests added, a sequence of transaction requests
\<pendingTxs> for transaction requests added but not posted yet, and a
natural number \<batchId> as the identifier of the next batch to be
posted.
%
L2 users add transaction requests to the arranger by invoking function
\<add> in the sequencer.
Function \<add> accumulates new valid transaction requests into sequence
\<pendingTxs> and set \<allTxs>.
When there are enough transaction requests in \<pendingTxs> or enough
time has elapsed since the last posted batch, a new event
\<timetopost> is triggered (line~\ref{seq-when}) and
Alg.~\ref{alg:sequencer} generates a new batch with all transaction
requests currently stored in \<pendingTxs>
(line~\ref{seq-when-save-pendingTxs}).
Then, the sequencer sends the fresh new batch with its identifier and
hash to all DAC replicas (line~\ref{seq-send-h}) and waits to receive
$f+1$ signatures, which ensures that at least one honest DAC replica
signed the batch and this replica can translate the batch upon
request.
Then, Alg.~\ref{alg:sequencer} aggregates all signatures received,
generates a batch tag including the combined signature and posts the
batch tag in the \<logger> smart contract in L1 (line~\ref{seq-post}).
Finally, Alg.~\ref{alg:sequencer} cleans \<pendingTxs> and increments
\<batchId> (lines~\ref{seq-when-clean-pendingTxs}
and~\ref{seq-increment-batchid}).
%

Alg.~\ref{alg:dc-member} presents the pseudo-code of honest DAC
replicas, which maintain a map \<hashes> mapping pairs of identifiers
and hashes $(\<batchId>,h)$ into batches $b$, such that
$\hash(b) = h$.
When the sequencer sends requests to sign a batch $(b,\<batchId>,h)$,
honest DAC replicas check that $h$ corresponds to hashing $b$
(line~\ref{dc-member-sign-assert}).
If the hash matches, the replica adds the new triplet to \<hashes>,
signs it and send the signature back to the sequencer.
Clients, such as STFs, can request the batch corresponding to a given
hash \(h\) in a batch tag with identifier \(\<batchId>\).
These clients perform the request by invoking
$\<translate>(\<batchId>,h)$.
Honest DAC replicas return the value stored in
$\<hashes>(\<batchId>,h)$, or an error when they do not know the hash
or the identifier does not match the hash.

Clients requesting to translate hashes from legal batch tags
can obtain the corresponding batch by contacting at most \(f+1\) DAC
replicas.
Answers can be locally verified by clients by checking that the batch
returned by a DAC replica corresponds to the hash \(h\) provided, as
we are assuming that the hash function is collision-resistant.
Therefore, clients need to contact multiple replicas only when they do
not receive a response or when the response does not match the hash,
but it suffices one correct answer by one replica, which can be
locally authenticated.
Since legal batch tags are signed by at least one honest replica, that
replica can correctly translate hashes (see
Lemma~\ref{l:semiTranslate}).
In the worst case, clients translating batches may first contact all
\(f\) Byzantine DAC replicas, who do not respond with the
corresponding batch, before requesting and receiving the desired batch
from an honest replica.

Clients can employ different strategies for contacting DAC members,
such as reaching out to all replicas in parallel or contacting them
sequentially until the desired batch is obtained.
Studying the trade-offs between these strategies is beyond the scope
of this paper.



\subsection{Proof of Correctness}\label{sec:seqDC-props}

We prove that an arranger consisting of the following components is
correct:
\begin{itemize}
\item a single sequential replica, denoted with \<S>, running
  Alg.~\ref{alg:sequencer}, and
\item a data availability committee, denoted with \<DAC>, composed of
  $n$ replicas.
  Out of these, at most $f < \frac{n}{2}$ replicas may be Byzantine,
  and the remaining \(n-f\) replicas are honest and implement
  Alg.~\ref{alg:dc-member}.
\end{itemize}

We assume that a majority of \<DAC> replicas are honest to guarantee
\PrTermination~(see Lemma~\ref{lem:semiTermination} below), and
discuss the trade-off between this trust assumption and liveness in
Section~\ref{sec:trustVSliveness}.

The arranger satisfies \PrLegality and \PrUniqueBatch
(Lemmas~\ref{l:semiLegal} and~\ref{l:semiUnique} below), because
the sequencer \<S> is the only replica that generates certified batch
tags and all batch tags \<S> generates are legal and have different
identifiers.

\begin{lemma}
  \label{lem:semiCert}
  Certified batch tags are generated only by sequencer \<S>.
\end{lemma}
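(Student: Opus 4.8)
The plan is to show that a certified batch tag — one carrying at least $f+1$ arranger-replica signatures — can only have been assembled by the sequencer \<S>, by arguing that no one else ever possesses enough honest signatures to meet the threshold. The key observation is structural: in this semi-decentralized arranger, honest DAC replicas (running Alg.~\ref{alg:dc-member}) only ever produce signatures in response to a \<signReq> message, and those signatures are sent back via a single unicast message \emph{to the sequencer} (line~\ref{dc-member-sign-assert} and the surrounding code of Alg.~\ref{alg:dc-member}). So every honest signature on a batch tag flows to \<S> and to \<S> alone.

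First I would fix an arbitrary certified batch tag $(\<id>, \<h>, \sigma)$ and unpack the definition of ``certified'': by definition $\sigma$ combines signatures from at least $f+1$ distinct arranger replicas. Since at most $f$ replicas are Byzantine, at least one of these signatures, say from replica $r$, comes from an honest replica. An honest replica is either the sequencer \<S> itself or an honest DAC member. If $r = \<S>$, then \<S> participated in forming the tag; but in fact I want the stronger conclusion that \<S> is the \emph{only} party that could have aggregated it, so I would instead argue as follows. Consider any honest DAC replica whose signature appears in $\sigma$: by inspection of Alg.~\ref{alg:dc-member}, that replica emitted its signature only as a direct unicast reply to \<S>, so \<S> is the only party that received it. Hence the only entity holding $f+1$ or more replica signatures for a common batch tag — and therefore the only entity able to run \<BLSAggregate> to build $\sigma$ and invoke \<logger>.\<post> — is \<S>.

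There is one subtlety to dispatch: a coalition of Byzantine replicas could in principle try to manufacture a certified tag among themselves, but since they number at most $f$, they can contribute at most $f$ signatures, which is strictly fewer than the $f+1$ required; they would need at least one honest signature, and by the channel argument above any honest signature is delivered only to \<S>. I would also note that a Byzantine replica cannot forge an honest replica's signature, by the unforgeability of the signature scheme under the assumed PKI (stated in Section~\ref{sec:prelim}). Putting these together: any certified batch tag must contain an honest signature, that honest signature was transmitted only to \<S>, so \<S> is the unique aggregator, i.e.\ certified batch tags are generated only by \<S>.

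**The main obstacle** is mostly one of careful bookkeeping rather than deep argument: I need to make precise the claim that honest DAC replicas \emph{only} send their signatures to \<S> and never broadcast or leak them otherwise — this is immediate from the pseudo-code of Alg.~\ref{alg:dc-member} but should be cited explicitly — and I need the signature scheme's unforgeability to rule out Byzantine replicas fabricating the missing honest signature. Neither is hard, but the proof should be explicit that ``certified'' forces $\ge f+1$ signatures while $|\text{Byzantine}| \le f$, so an honest contributor is unavoidable, and that this honest contributor funnels its signature to \<S> alone.
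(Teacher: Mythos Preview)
Your proof reaches the right conclusion but takes a more elaborate route than the paper, and one step leans on an assumption the model does not provide. The paper's two-line argument is: a certified batch tag carries $f+1$ signatures, so at least one honest DAC replica signed it; and honest DAC replicas only sign batches \emph{received from} \<S>. Hence the batch content of any certified tag originated with \<S> --- that is what ``generated by \<S>'' means in context, and it is precisely what Lemmas~\ref{l:semiLegal} and~\ref{l:semiUnique} consume downstream.

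You instead argue about where honest signatures \emph{go}: honest DAC replicas unicast their signature only to \<S>, so only \<S> can amass $f+1$ signatures and aggregate. This is a stronger reading of ``generated'' (namely ``assembled by''), and the step ``\<S> is the only party that received it'' requires the unicast reply channel to be confidential --- otherwise a Byzantine eavesdropper could also harvest honest signatures and build a certified tag itself. The model in Section~\ref{sec:prelim} assumes partial synchrony and a PKI but says nothing about channel secrecy, so that inference is not justified as written. The repair is immediate: drop the channel-direction argument and reason, as the paper does, about what honest replicas are willing to \emph{sign} rather than where they \emph{send}; then regardless of who does the aggregation, the batch in any certified tag is one that \<S> produced.
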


\begin{proof}
  Certified batch tags require \(f+1\) signatures, so at least one
  honest DAC replica signed the batch.
  Honest DAC replicas only sign batches received from the sequencer
  \<S>.
\end{proof}

\begin{lemma}\label{l:semiLegal}
  All certified batches generated and posted by sequencer \<S> are
  legal.
\end{lemma}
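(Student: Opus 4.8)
The plan is to show that every batch tag that sequencer \<S> posts to the \<logger> satisfies the three legality conditions \PrValidity, \PrIntegrityOne, and \PrIntegrityTwo, by tracking how batches are built in Alg.~\ref{alg:sequencer}. The key invariant is that the sequence \<pendingTxs> and the set \<allTxs> together discipline which transaction requests can enter a batch: \<allTxs> accumulates every valid request ever added and is never cleared, while \<pendingTxs> holds exactly the requests added since the last post and is reset to $\epsilon$ after each post (line~\ref{seq-when-clean-pendingTxs}).

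First I would establish \PrValidity: the only place a transaction request enters \<pendingTxs> is line~\ref{seq-add-pendingTxs} of function \<add>, which is guarded by the assertion $\<valid>(\<tr>)$ on line~\ref{seq-add-assert}; since a posted batch $b$ is precisely the contents of \<pendingTxs> at the time \<timetopost> fires (line~\ref{seq-when-save-pendingTxs}), every element of $b$ is a valid transaction request that was added by an L2 user. Next, for \PrIntegrityOne I would argue that no request appears twice in a single batch: the same assertion on line~\ref{seq-add-assert} also checks $\<tr>\notin\<allTxs>$, and line~\ref{seq-add-allTxs} immediately inserts \<tr> into \<allTxs>, so a request can be appended to \<pendingTxs> at most once over the entire run of \<S>; in particular it occurs at most once in any one batch.

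For \PrIntegrityTwo I would use the same monotonicity of \<allTxs> across batch boundaries: when \<S> posts batch $b_k$ with identifier $k$, every request in $b_k$ was in \<allTxs> at the moment it was added, and \<allTxs> is never reset; hence a request placed in batch $b_k$ cannot have been in any earlier batch $b_j$ with $j<k$, because had it been added before it would have been rejected by the $\<tr>\notin\<allTxs>$ check. Since \<S> is the only replica producing certified batch tags (Lemma~\ref{lem:semiCert}), every batch previously accepted by \<logger> is one of these $b_j$, so no request in $b_k$ appears in a legal batch tag previously accepted by \<logger>, giving \PrIntegrityTwo. Combining the three conditions, every certified batch tag posted by \<S> is legal.

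The main obstacle is the interleaving of the asynchronous \<add> calls with the \<timetopost> handler: one must be careful that \<S> is specified as a single \emph{sequential} replica (as stated in the correctness setup), so \<add> and the \<timetopost> block do not execute concurrently, and the read of \<pendingTxs> on line~\ref{seq-when-save-pendingTxs} together with its reset on line~\ref{seq-when-clean-pendingTxs} is atomic with respect to further \<add> calls. Granting that, the argument is a straightforward invariant over the state of \<allTxs> and \<pendingTxs>; I would phrase it as: ``at every point in the execution, $\<allTxs>$ equals the disjoint union of $\<pendingTxs>$ (viewed as a set) with all previously posted batches,'' and then the three legality properties fall out as corollaries.
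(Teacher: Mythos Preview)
Your proposal is correct and follows essentially the same approach as the paper's proof: both trace the three legality conditions back to the guards in \<add> (line~\ref{seq-add-assert}) and the bookkeeping on \<pendingTxs> and \<allTxs>. Your treatment is considerably more detailed---the paper's proof is four sentences and does not explicitly invoke Lemma~\ref{lem:semiCert} for \PrIntegrityTwo nor discuss the sequential-execution assumption---but the underlying argument is the same invariant on \<allTxs> and \<pendingTxs> that you spell out.
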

\begin{proof}
  Following Alg.~\ref{alg:sequencer}, \<S> generates batches taking
  transaction requests from $\<pendingTxs>$, and thus, transaction
  requests come from L2 users invocations to \<add>.
  Function \<add> guarantees that each transaction request in
  $\<pendingTxs>$ is valid and new.
  Moreover, $\<pendingTxs>$ is emptied after generating batches so
  transaction requests are in only one batch generated by \<S>.
  Before posting batch tags, \<S> gathers at least \(f + 1\)
  signatures, and thus, \<S> posts only legal batch tags.
\end{proof}

\begin{lemma}\label{l:semiUnique}
  All certified batches generated by sequencer \<S> have different
  identifier.
\end{lemma}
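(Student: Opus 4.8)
The plan is to exploit the fact that the sequencer \<S> is a single sequential replica together with the monotone behaviour of its counter \<batchId>. First I would reduce the claim: by Lemma~\ref{lem:semiCert} every certified batch tag is generated by \<S>, so it suffices to show that any two batch tags \emph{posted} by \<S> carry distinct identifiers.

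Next I would reason about the control flow of Alg.~\ref{alg:sequencer}. The variable \<batchId> is written only at initialization (line~\ref{seq-init}, set to $0$) and at the end of the \<timetopost> handler (line~\ref{seq-increment-batchid}, incremented by one); no other line modifies it. Since \<S> is sequential, its executions of the \<timetopost> handler (lines~\ref{seq-when}--\ref{seq-when-end}) are totally ordered and non-overlapping. Hence the value of \<batchId> read at line~\ref{seq-post} during the $k$-th execution of the handler is exactly $k-1$: a short induction on $k$, with the base case provided by the initialization and the inductive step by the single increment performed at the end of each handler execution.

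Finally, the batch tag posted during the $k$-th handler execution has identifier $k-1$ (line~\ref{seq-post}), and \<S> posts at most one batch tag per handler execution; therefore batch tags posted in different executions have different identifiers. Combining this with Lemma~\ref{lem:semiCert}, all certified batch tags have pairwise distinct identifiers, which is the statement.

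I do not anticipate a genuine obstacle here; the only point that needs to be spelled out is that the \<timetopost> handler is not re-entrant and that \<batchId> is mutated nowhere else, so that the induction on the number of completed handler executions is sound.
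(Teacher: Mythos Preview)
Your proposal is correct and follows essentially the same approach as the paper: the key observation in both is that \<batchId> is only ever incremented (at line~\ref{seq-increment-batchid}) after each posted batch tag and never decreased, so successive batch tags carry strictly increasing identifiers. Your argument is simply more explicit (spelling out sequentiality, non-reentrancy, and the induction on handler executions), and your invocation of Lemma~\ref{lem:semiCert} is unnecessary since the statement is already restricted to certified batches generated by \<S>, but none of this deviates from the paper's reasoning.
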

\begin{proof}
  Once sequencer \<S> creates a certified batch tag it increments
  variables \<batchId> (see line~\ref{seq-increment-batchid} in
  Alg.~\ref{alg:sequencer}), which is used to as the identifier for
  the next batch tag.
  Sequencer \<S> never decreases variable \<batchId>, so each batch tag
  has a different identifier.
\end{proof}

The following lemma implies \PrTermination.

\begin{lemma}
  \label{lem:semiTermination}
  Valid transaction requests added through \<add> eventually appear in
  a legal batch tag accepted by the \<logger> smart contract.
\end{lemma}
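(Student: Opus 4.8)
The plan is to prove Lemma~\ref{lem:semiTermination} by tracing the lifecycle of an arbitrary valid transaction request \(\<tr>\) added through \<add> on the (honest) sequencer \<S>, and showing that it eventually reaches a posted legal batch tag. First I would invoke the code of function \<add> (lines~\ref{seq-add}--\ref{seq-add-end} of Alg.~\ref{alg:sequencer}): since \(\<tr>\) is valid and, the first time it is added, not yet in \<allTxs>, the assertion passes and \(\<tr>\) is appended to \<pendingTxs> and inserted into \<allTxs>. Thus \(\<tr>\) sits in \<pendingTxs> until the next \<timetopost> event. Because \<pendingTxs> is now nonempty, the triggering condition for \<timetopost> (enough requests, or enough elapsed time since the last posted batch) is eventually met, so the \<timetopost> handler (lines~\ref{seq-when}--\ref{seq-when-end}) fires with \(\<tr>\) still in \<pendingTxs>, and the batch \(b\) saved on line~\ref{seq-when-save-pendingTxs} contains \(\<tr>\).

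Next I would argue that this handler runs to completion and actually posts the batch tag. The only blocking step is the \<wait> on line~\ref{seq-wait} for \(f+1\) valid signed responses \(\<signResp>(\<batchId>,h,i,\sigma_i)\). Here I would use the trust assumption that a majority of \<DAC> replicas are honest, i.e. \(n-f > f\), so there are at least \(f+1\) honest DAC members. By Alg.~\ref{alg:dc-member}, each honest DAC member that receives the multicast \(\<signReq>(b,\<batchId>,h)\) checks \(h = \hash(b)\) — which holds since \<S> set \(h \leftarrow \hash(b)\) on line~\ref{seq-when-hash-b} — records the triplet in \<hashes>, signs it, and returns a valid signature to \<S>. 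By the partial-synchrony model, these messages are eventually delivered, so \<S> eventually collects \(f+1\) valid responses, the \<wait> unblocks, and \<S> executes \<logger>.\<post>\((\<batchId>,h,\<BLSAggregate>(\cdots))\) on line~\ref{seq-post}. By the L1 assumptions (liveness and safety of the \<logger> contract, which accepts any batch tag with \(f+1\) valid signatures), this posted batch tag is eventually accepted by \<logger>. Finally, the posted tag is certified (it carries \(f+1\) signatures), hence legal by Lemma~\ref{l:semiLegal}, and it corresponds to a batch containing \(\<tr>\).

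One subtlety I would address explicitly: \(\<tr>\) could already have been present in \<allTxs> when \<add> was invoked — but that can only happen if \(\<tr>\) was added earlier, in which case the earlier invocation already placed it into \<pendingTxs> and the argument above applies to that earlier point. So without loss of generality we consider the first successful \<add> of \(\<tr>\). I would also note that \(\<tr>\) is not lost between being appended to \<pendingTxs> and the next \<timetopost>: \<S> is a single sequential replica, function \<add> only appends to \<pendingTxs>, and \<pendingTxs> is cleared (line~\ref{seq-when-clean-pendingTxs}) only \emph{after} the batch containing it has been posted.

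The main obstacle is the liveness of the \<wait> on line~\ref{seq-wait}: one must rule out that \<S> blocks forever waiting for signatures. This is exactly where the \(f < n/2\) assumption is essential — it guarantees \(f+1\) honest DAC members who will each respond with a valid signature (they cannot object, since \(h\) genuinely equals \(\hash(b)\)), and partial synchrony guarantees eventual delivery. Everything else (the \<timetopost> trigger firing, the L1 eventually accepting the post) follows from the stated model assumptions about event triggering and L1 liveness, so the proof is mostly a careful chaining of these guarantees rather than a deep argument.
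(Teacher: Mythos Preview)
Your argument is correct and follows the same line as the paper's proof: trace \(\<tr>\) through \<add> into \<pendingTxs>, invoke \(f<n/2\) to guarantee \(f+1\) honest DAC members sign and unblock the \<wait> on line~\ref{seq-wait}, and conclude that \<S> posts a legal batch tag containing \(\<tr>\). The only point the paper makes more explicit is \emph{acceptance} by the \<logger>: since the \<logger> accepts only the first certified tag per identifier, the paper invokes Lemma~\ref{l:semiUnique} to rule out any competing tag, whereas you appeal to a generic ``accepts any batch tag with \(f+1\) valid signatures,'' which slightly overstates the \<logger>'s acceptance rule but is harmless here.
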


\begin{proof}
  Let \(\<tr>\) be a valid transaction request.
  The first time \<add>\((\<tr>)\) is invoked, \<S> appends $\<tr>$ to
  $\<pendingTxs>$ where \(\<tr>\) remains until it is added to a
  batch.
  At some point after $\<tr>$ was added, event $\<timetopost>$ is
  triggered, so \<S> generates a new batch with all transaction
  requests currently in $\<pendingTxs>$.
  Then, \<S> sends to all \<DAC> members the batch
  (lines~\ref{seq-when}-\ref{seq-send-h}) and waits for $f+1$
  responses (line~\ref{seq-wait}).
  Since $f<n/2$, there are at least $f+1$ honest \<DAC> replicas
  executing Alg.~\ref{alg:dc-member}, all of which sign the hash and
  send it back to \<S>
  (lines~\ref{dc-member-receive}-\ref{dc-member-receive-end}).
  Therefore, \<S> eventually receives enough signatures and posts the
  legal batch tag containing $\<tr>$ (line~\ref{seq-post}).
  Since sequencer \<S> only generates one certified batch tag per
  identifier (see Lemma~\ref{l:semiUnique}), the \<logger>
  contract accepts the legal batch tag containing \(\<tr>\).
\end{proof}

Finally, we show each certified batch tag can be resolved by at least
one honest \<DAC> member.
Let \(m\) be a \<DAC> member that signed $(\<batchId>,h)$, the
result of executing $m.$\<translate>$(\<batchId>, h)$ is a batch $b$
such that $\hash (b) = h$.
Since all certified batch tags are signed by at least one honest DAC
member, it follows that arrangers consisting of \<S> and \<DAC>
satisfy \PrAvailability.

\begin{lemma}\label{l:semiTranslate}
  Honest \<DAC> members resolve certified batch tags.
\end{lemma}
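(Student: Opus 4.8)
The plan is to reduce the statement to a purely local invariant of an honest DAC member running Alg.~\ref{alg:dc-member}: \emph{whenever $m$ has sent a valid signature for a pair $(\<batchId>,h)$, the entry $\<hashes>(\<batchId>,h)$ is defined and equals some batch $b$ with $\hash(b)=h$, and this remains true forever after}. First I would fix a certified batch tag $(\<id>,h,\sigma)$; by definition it carries at least $f+1$ replica signatures, and since at most $f$ replicas are Byzantine, at least one honest DAC member $m$ signed $(\<id>,h)$ (this is the argument already used to connect certification with honesty, e.g. in Lemma~\ref{lem:semiCert}). It then suffices to show that $m.\<translate>(\<id>,h)$ returns a batch hashing to $h$.

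Next I would trace the control flow of Alg.~\ref{alg:dc-member} at member $m$. An honest member produces a signature for $(\<id>,h)$ only inside the handler for \<signReq>$(b,\<id>,h)$ coming from the sequencer, and only after the assertion $\hash(b)=h$ at line~\ref{dc-member-sign-assert} succeeds; in the same handler, before replying, it inserts the triple into \<hashes>, so that $\<hashes>(\<id>,h)=b$ with $\hash(b)=h$. Monotonicity of \<hashes> (entries are only ever added, never deleted) gives that $(\<id>,h)\in\mathrm{dom}(\<hashes>)$ at every later point in time; moreover any value subsequently stored under the key $(\<id>,h)$ must again hash to $h$, so the invariant is preserved regardless of the order in which the sequencer's requests arrive. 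Consequently, when $m$ later serves $\<translate>(\<id>,h)$, the identifier/hash lookup in \<hashes> succeeds, so $m$ does not return \<invalidId> or \<invalidHash>, and instead returns the stored batch $b$ with $\hash(b)=h$, as claimed. Combined with \PrUniqueBatch and collision resistance, this $b$ is in fact the unique batch associated with the tag, matching the informal statement preceding the lemma.

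The only mildly delicate point—the ``hard part''—is the bookkeeping that the honest member commits the entry to \<hashes> \emph{before} emitting its signature, so that there is no window in which a signed pair has no corresponding stored batch; this is immediate from the order of statements in the \<signReq> handler of Alg.~\ref{alg:dc-member}, but it is the step worth stating explicitly. Everything else is a direct read of the pseudocode together with the pigeonhole step $f+1>f$ that produces the honest signer.
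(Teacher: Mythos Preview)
Your proposal is correct and follows essentially the same approach as the paper: trace the \<signReq> handler in Alg.~\ref{alg:dc-member} to see that an honest member stores $(\<batchId>,h)\mapsto b$ with $\hash(b)=h$ before emitting its signature, then use monotonicity of \<hashes> to conclude that \<translate> succeeds. The only cosmetic difference is that you fold the pigeonhole step ($f{+}1$ signatures imply an honest signer) into the proof itself, whereas the paper states that step in the paragraph preceding the lemma and keeps the proof purely local to a single honest member $m$.
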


\begin{proof}
  Let $m$ be an honest \<DAC> replica.
  Following Alg.~\ref{alg:dc-member}, \(m\) only signs hashes
  answering requests from \<S>.
  Upon receiving a request from \<S> to sign $(h,b,\<batchId>)$, after
  checking that $h$ correspond to \(b\) and \(\<batchId>\) has not been
  used, \(m\) saves the data received, signs the request and answers
  to \<S>.
  As no entry in $\<hashes>$ is ever deleted, whenever
  $m.$\<translate>$(\<batchId>,h)$ is invoked after \(m\) has singed
  the request, $m$ uses map $\<hashes>$ to find batch $b$ and returns
  it.
\end{proof}

Alg.~\ref{alg:sequencer} plus Alg.~\ref{alg:dc-member} guarantee that
\<S> and \<DAC> form a correct arranger, which follows from the previous
lemmas.
\begin{theorem}
  Arrangers composed by sequencer \<S> and DAC \<DAC> are correct
  arrangers.
\end{theorem}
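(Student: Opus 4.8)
The plan is to derive each of the four properties that define a correct arranger --- \PrLegality, \PrUniqueBatch, \PrTermination, and \PrAvailability --- by combining the lemmas already proved, so the theorem follows immediately once all four are in hand.

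First I would dispatch the two safety properties that concern the batch tags produced by the arranger. By Lemma~\ref{lem:semiCert}, every certified batch tag is generated by the sequencer \<S>; composing this with Lemma~\ref{l:semiLegal}, which says each certified batch tag that \<S> generates is legal, gives \PrLegality. For \PrUniqueBatch, take two certified batch tags $(\<id>,\<h>_1,\sigma_1)$ and $(\<id>,\<h>_2,\sigma_2)$ with a common identifier; Lemma~\ref{lem:semiCert} places both under \<S>'s authorship, and Lemma~\ref{l:semiUnique} shows \<S> never reuses an identifier, so the two tags refer to the same batch $b$, and since the hash component is computed deterministically as $\hash(b)$ we get $\<h>_1=\<h>_2$.

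Next I would handle the remaining properties. In this architecture \<S> is the only replica exposing \<add> and \<S> is honest, so ``a valid transaction request added to an honest replica'' is exactly a valid request passed to \<S>.\<add>; Lemma~\ref{lem:semiTermination} then states precisely that such a request eventually ends up in a legal batch tag accepted by the \<logger> contract, which is \PrTermination. For \PrAvailability, recall that a legal batch tag is by definition certified, hence carries at least $f+1$ signatures and therefore at least one from an honest DAC member; that member signed $(\<id>,\<h>)$, so by Lemma~\ref{l:semiTranslate} it answers $\<translate>(\<id>,\<h>)$ with the batch $b$ satisfying $\hash(b)=\<h>$, which is \PrAvailability.

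No step here involves real difficulty --- the work is entirely bookkeeping, assembling pieces that were isolated in the preceding lemmas. The only point that needs a little care is the specialization: the general definitions in Section~\ref{sec:api-properties} speak of arbitrary honest replicas and of posted legal batch tags, and one must check that in this two-component arranger these reduce to statements about \<S> (the sole honest sequencer) and about the certified tags it produces, so that the quantifiers in \PrTermination and \PrAvailability line up with the hypotheses of Lemmas~\ref{lem:semiTermination} and~\ref{l:semiTranslate}.
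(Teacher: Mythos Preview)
Your proposal is correct and takes essentially the same approach as the paper: the paper's proof is a single sentence stating that correctness ``follows from the previous lemmas,'' and the surrounding text already sketches exactly the derivations you spell out (Lemmas~\ref{lem:semiCert}--\ref{l:semiUnique} for \PrLegality and \PrUniqueBatch, Lemma~\ref{lem:semiTermination} for \PrTermination, and Lemma~\ref{l:semiTranslate} together with the $f{+}1$-signature observation for \PrAvailability). Your write-up is simply a more explicit assembly of the same pieces.
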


Finally, note that Alg.~\ref{alg:sequencer} orders transaction
requests according to the invocation order to \<add>, so transaction
requests are ordered according to a first-come, first-served policy.

\subsection{Trust Assumptions vs Liveness}
\label{sec:trustVSliveness}

\imarga{seems like requiring an honest majority is not a good
  idea. L2beat people claim that DACs should require \(1/3\) or less
  honest members. l2beat . com/scaling/projects/immutablex \# da-layer
}
%
Arrangers of some L2s also use a centralized sequencer ordering
transaction requests and a DAC translating hashes back to batches.
The main difference between the implementation described in this
section and the implementation currently used in most Validiums and
Optimiums is that the latter requires only a minority of DAC members
to be honest, whereas our implementation requires a majority.
Having a minority of honest DAC replicas and an honest sequencer is
enough to guarantee all safety properties of a correct arranger but
not to ensure progress.

Arrangers with just a minority of honest DAC members violate property
\PrTermination because collecting signatures from honest DAC replicas
is not sufficient to generate certified batch tags.
As a result, some of those L2s (like Arbitrum AnyTrust) implement a
fallback mechanism: the sequencer can post a compressed version of
batches if it is unable to collect enough DAC signatures.

In contrast, this is not necessary in our solution, as we assume that
a majority of DAC replicas are honest guaranteeing \PrTermination~(see
Lemma~\ref{lem:semiTermination} above).
%
This highlights a trade off between trust assumptions and liveness
properties.

Regardless of the trust assumption for DAC replicas, both
semi-decentralized arrangers rely on the centralized sequencer as a
single point of trust and failure.
In the next section, we eliminate this single point of trust and
failure by presenting a fully decentralized arranger using SBC as the
main building block.


\section{Arranger \#2: Full Decentralization}
\label{sec:seq-decentralized}

%
We present now a \emph{fully decentralized} arranger implementation
using Set Byzantine Consensus (see
Fig.~\ref{fig:arranger_decentralized}) that satisfies the correctness
properties listed in Section~\ref{sec:api-properties}.
In this implementation, operations \<add> and \<translate> are
completely decentralized.

\begin{figure}[!h]
    \centering
    \includegraphics[scale=0.40]{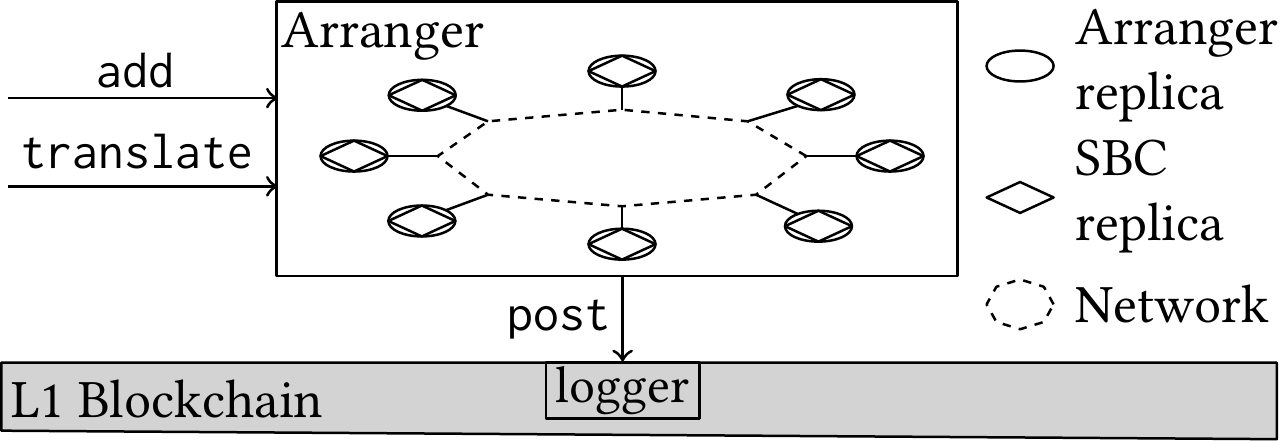}
    \caption{Fully decentralized arranger using SBC.}
    \label{fig:arranger_decentralized}
\end{figure}

Our implementation uses the sets decided by SBC rounds as batches.
Arranger replicas \emph{extend} SBC replicas instead of
modularly using an SBC as clients.
This is because Byzantine-tolerant blocks do not compose efficiently
as they need to implement defensive mechanisms to protect against
malicious clients and servers.
Clients interacting with an SBC replica cannot determine whether the
replica is honest or Byzantine and SBC replicas cannot determine
whether a client is dishonest.
As consequence, clients must contact multiple replicas to ensure that
the data they receive is the data agreed upon by all honest SBC
replicas.
This is not necessary when extending honest SBC replicas to build
honest arranger replicas, since replicas can trust the data maintain
internally.
In our fully decentralized arranger, all replicas perform both the roles of
sequencers and DAC members, and thus, honest arranger replicas are
honest SBC replicas,
honest sequencers, and honest DAC replicas.
Finally, arranger replicas take turns to post data to L1.

\subsection{Implementation}
Alg.~\ref{alg:arranger-sbc} shows the pseudo-code of an honest
arranger replica (here we only show code in addition to the SBC part
of the replica).
Replicas running Alg.~\ref{alg:arranger-sbc} keep a local map
\<hashes> storing inverse translations of hashes and a local set
$\<signatures>$ storing cryptographically signed batch tags.

\begin{figure}[h]
\input{algorithms/arranger-sbc}
\end{figure}

Distributed arranger replicas implement \<add> by adding transaction
requests as elements into their SBC replica to be included in a later
proposal and eventually in a decided set.
When a new set is decided by SBC, each replica locally receives a
\(\<SetDeliver>\) event with the round identifier and the consented set.
As mentioned above, honest arranger replicas extend honest SBC
replicas.
Therefore, honest replicas know that the set received in the event
\(\<SetDeliver>\) is the one that has been agreed upon by all other
honest SBC replicas.
Then, honest replicas order all new valid transaction requests in the
consented and packed these to form a new batch using function
\(\<tobatch>\).
Then, they update their local variable \<hashes>.
After that, honest replicas sign the new batch tag and broadcast
the signature.
Each replica receives a \(\<deliver>\) event informing about the
correctly cryptographically signed batch tags they received which are
stored in the local set $\<signatures>$.

Arranger replicas take turns posting legal batch tags into the
\<logger> contract in L1, for example slicing time and distribute the
time slices between the replicas, which is feasible because we assumed
the system is partially synchronous and slices can be reasonably long
intervals like seconds.
Event \<myturn> tells replicas when it is their turn to post a new
batch tag to L1.
This is a simple way to guarantee eventual progress.
Even if Byzantine replicas do not post legal batch tags in their
turns, the rounds will continue to advance until it is the turn of an
honest replica.
Also, note that Byzantine replicas cannot possibly collect enough
signatures of a fake batch tag to generate a certified batch tag so
either (1) they post a batch tag without enough signatures (which will
be discarded), or (2) they post the legal batch tag, or (3) they
remain silent.
Function $\<nextBatch>$ returns the next batch identifier to be posted
to the \<logger> contract which is the minimum identifier for which
there is no legal batch tag in the \<logger> L1 smart contract.
When it is the turn of arranger replica $\Rtt$ to post, and $\Rtt$ has
received enough signatures for the batch, $\Rtt$ computes a combined
signature $\sigma$ and posts the legal batch tag
$(\<batchId>,h,\sigma)$ into the \<logger> L1 smart contract.

\subsection{Proof of Correctness}%
\label{app:ProofsSeqDec}

Alg.~\ref{alg:arranger-sbc} implements a correct
arranger.
We inherit the assumptions from SBC that the number of Byzantine
replicas are less than one third of all arranger replicas, so for $N$
total replicas, the maximum number of Byzantine replicas is $f$ such
that $f < N/3$.

We first prove that a certified batch tag corresponds with the set
consented in an SBC round.
Formally, let \((\<batchId>, \hash(b),\sigma)\) be a certified batch
tag, then $\<tobatch>(S) = b$ where set \(S\) corresponds to the set
consented by SBC round number $\<batchId>$.

\begin{lemma}\label{l:batch-epoch}
  A batch of transaction requests in certified batch tag corresponds to
  transaction requests in a consented SBC set.
\end{lemma}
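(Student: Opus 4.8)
The plan is to trace backwards from the existence of a certified batch tag to the SBC round that produced it, using the fact that honest arranger replicas extend honest SBC replicas. First I would unfold the definition of ``certified'': a certified batch tag $(\<batchId>, \hash(b), \sigma)$ carries at least $f+1$ signatures, and since $f < N/3 < N/2$, at least one of those signatures belongs to an honest arranger replica~$\Rtt$. So it suffices to show that whenever an honest replica signs a batch tag with identifier $\<batchId>$, the batch it signed is exactly $\<tobatch>(S)$, where $S$ is the set decided by SBC round $\<batchId>$.

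Next I would inspect Alg.~\ref{alg:arranger-sbc}: an honest replica only ever signs a batch tag inside the handler for the $\<SetDeliver>(i, E)$ event, where it sets the batch identifier to the round number $i$, computes the batch via $\<tobatch>$ applied to (the ordered, valid part of) the delivered set $E$, computes $h = \hash(\cdot)$, and only then signs $(\i, h)$ and broadcasts the signature. Because $\Rtt$ is an honest SBC replica, by \PrSBCAgreement the set $E$ it receives in round $i$ is precisely the set $S$ decided by that round, common to all honest SBC replicas. Hence the batch $\Rtt$ signed equals $\<tobatch>(S)$, and its hash is $\hash(\<tobatch>(S))$. Matching this against the certified tag $(\<batchId>, \hash(b), \sigma)$ and using collision resistance of $\hash$ gives $b = \<tobatch>(S)$ with $S$ the set consented in SBC round $\<batchId>$, as claimed.

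The main obstacle is a subtle well-definedness issue rather than a deep argument: I must make sure that the identifier attached to a signed batch tag is unambiguously the SBC round number. This requires checking that honest replicas never reuse a round number for a different batch and never sign a batch tag outside the $\<SetDeliver>$ handler — in other words, that the code path assigning $\<batchId>$ is the only source of signable tags. If the pseudocode allows an honest replica to sign a tag it merely received over the network (rather than one it computed itself from a delivered set), the lemma would need the stronger claim that \emph{some} honest replica originally computed that tag from its own $\<SetDeliver>$ event; I would handle this by observing that a valid signature from an honest replica can only have been produced by that replica executing its signing line, which sits exclusively in the $\<SetDeliver>$ handler. A secondary point to state carefully is that $\<tobatch>$ is deterministic (fixed ordering of valid requests, fixed packing), so that all honest replicas in round $i$ produce the identical batch from the identical set $S$; this is what lets us speak of ``the'' batch corresponding to the round.
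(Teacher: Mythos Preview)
Your proposal is correct and follows essentially the same approach as the paper: from $f+1$ signatures deduce an honest signer, then observe that honest replicas sign only inside the \(\<SetDeliver>\) handler, hence only hashes of consented SBC sets. The paper's proof is a terse two-line version of exactly this; your added remarks on \PrSBCAgreement, collision resistance of $\hash$, and determinism of $\<tobatch>$ make explicit details the paper leaves implicit, but they do not change the route.
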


\begin{proof}
  A certified batch tags requires $f+1$ signatures, so at least one
  honest replica has signed the batch.
  Honest replicas only sign the hash of the set consented by the SBC
  service.
\end{proof}

From the previous lemma and properties \PrSBCValidity and
\PrSBCIntegrity, we obtain the arranger property \PrLegality.
Moreover, the arranger property \PrUniqueBatch follows from the
previous lemma and the property \PrSBCAgreement.

Byzantine replicas can re-post existing certified batch tags, but they
can not forge fake batches or fabricate certified batch tags because
this requires collecting $f+1$ signatures, which implies that
at least one honest replica signed.
Hence, all Byzantine replicas can do is to post the same batch tag
several times (perhaps with different collections of $f+1$
signatures).
Since the \<logger> smart contract only accepts the first legal batch
tag per identifier, no transaction is executed twice and repeated
batches are discarded.

\begin{lemma}\label{l:termination}
  All SBC consented sets are posted in the \<logger> L1 smart contract
  as legal batch tags.
\end{lemma}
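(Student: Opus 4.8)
The plan is to show that every SBC consented set eventually gives rise to a legal batch tag that the \<logger> smart contract accepts, by combining the liveness of SBC with the round-robin posting schedule and the L1 liveness assumption. First I would argue that each SBC round $i$ eventually terminates with a decided set $S_i$: this is exactly \PrSBCTermination, and every honest replica receives the same $S_i$ by \PrSBCAgreement. Then, since honest arranger replicas extend honest SBC replicas, upon the \<SetDeliver>$(i, S_i)$ event each honest replica deterministically computes $b_i = \<tobatch>(S_i)$, $h_i = \hash(b_i)$, signs the batch tag $(i, h_i)$, and broadcasts its signature. Because more than two-thirds of the replicas are honest, there are at least $N - f \ge f+1$ honest signatures on $(i, h_i)$, and the broadcast primitive guarantees every honest replica eventually collects them (via the \<deliver> events populating \<signatures>), so every honest replica eventually holds enough signatures to form a certified — hence legal, by Lemma~\ref{l:batch-epoch} together with \PrSBCValidity and \PrSBCIntegrity — batch tag for round $i$.

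Next I would handle the posting step. The round-robin schedule assigns each time slice to a replica, and the \<myturn> event fires infinitely often for every replica; in particular it fires infinitely often for honest replicas. Fix an SBC round $i$. Suppose, for contradiction, that no legal batch tag with identifier $i$ is ever accepted by the \<logger>. Then $\<nextBatch>$ returns a value $\le i$ at every honest replica from round $i$ on, so in every honest replica's turn after it has assembled the signatures for round $i$, it will attempt to post a legal batch tag for some identifier $j \le i$ with $j$ still missing from the \<logger>. By well-founded induction on the missing identifiers (the smallest missing one gets posted by the first honest replica whose turn comes up, by the L1 liveness assumption that every submitted L1 transaction is eventually processed), each missing identifier $\le i$ is eventually filled, contradicting the assumption that $i$ stays missing. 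Hence a legal batch tag with identifier $i$ is eventually accepted. Since $i$ was arbitrary, all SBC consented sets are posted as legal batch tags.

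The main obstacle I expect is the bookkeeping around $\<nextBatch>$ and the possibility that Byzantine replicas occupy turns and post garbage: I must argue that a Byzantine replica's turn cannot \emph{block} progress — it can only post a batch tag with too few signatures (rejected by \<logger>), re-post an already accepted legal tag (harmless), or stay silent — and therefore the ``gap'' of missing identifiers is never widened by adversarial action, only narrowed by honest turns. Making the induction on missing identifiers rigorous (it is a downward-closed obligation: identifier $i$ can only be posted once all smaller ones are, because $\<nextBatch>$ returns the minimum missing identifier) is the delicate point; everything else is a direct appeal to \PrSBCTermination, \PrSBCAgreement, the reliable-broadcast delivery of signatures, the honest supermajority $N - f \ge f+1$, and the stated L1 liveness. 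It is also worth noting explicitly that each legal batch tag for round $i$ produced by distinct honest replicas carries the same $h_i$ (by \PrSBCAgreement and determinism of \<tobatch>), so it does not matter which honest replica's turn succeeds in posting — consistent with \PrUniqueBatch.
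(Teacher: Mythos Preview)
Your proposal is correct and follows essentially the same approach as the paper: both arguments hinge on the minimum missing identifier, the fact that all honest replicas eventually hold $f+1$ signatures for it (via \PrSBCAgreement and the signature broadcast), and that an honest replica's turn then posts it. You make the well-founded induction over missing identifiers and the harmlessness of Byzantine turns explicit, whereas the paper compresses this into a single case analysis on whether $\<nextBatch>$ still returns the chosen identifier when the first sufficiently-informed honest replica's turn arrives; the substance is the same.
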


\begin{proof}
  Let $\<batchId>$ be the minimum number for which there is no legal
  batch tag posted in the \<logger> contract, and let $S$ be the set
  consented in the $\<batchId>$-th SBC round of consensus.
  Property \PrSBCAgreement ensures that all honest replicas agree on
  the content of $S$.
  Honest arranger replicas cryptographically sign the hashes of the
  the sets agreed by SBC consensus, and broadcast their signature to
  all other arranger replicas.
  Therefore, all arranger replicas eventually receive the signature
  from honest replicas.
  Let \(\Rtt\) be the first honest replica that is in charge of
  posting batch tags to the logger after having received at least
  $f+1$ signatures of the hash of set $S$.
  There are two cases: either $\<nextBatch>$ returns $\<batchId>$ or
  it returns a larger identifier.
  In the former case, $\Rtt$ has enough signatures to generate a legal
  batch tag for $s$ and posts it in the \<logger> contract
  (lines~\ref{alg:arranger-sbc-upon-post-begin}-\ref{alg:arranger-sbc-upon-post-end}).
  In the latter case, a legal batch tag with identifier $\<batchId>$
  has already been posted corresponding to set \(S'\)~(see
  Lemma~\ref{l:batch-epoch}).
  By~\PrSBCAgreement, we have that $S'=S$.
\end{proof}

All transaction requests added to at least one honest replica
eventually appear in a legal batch tag, which follows from the
previous lemmas and property~\PrSBCCensorshipResistance.
Therefore, L2 users that contact some honest replica are guaranteed
\PrTermination.
L2 users can make sure that they contact at least one honest replica
by contacting $f+1$ replicas sequentially or in parallel.
Alternatively, clients can follow a lightweight protocol contacting
just one replica, then waiting for some time to observe whether the
transaction request is in a batch accepted by the \<logger> smart
contract.
If some time passes and the transaction is not seen, the client can
contact another replica.
This protocol guarantees that the transaction request will eventually
be included in a batch accepted by the \<logger> smart contract.
Moreover, since duplicated transaction requests will be discarded,
this protocol also guarantee that the transaction is executed exactly
once.

Finally, we prove that every hash in a legal batch tag posted by the
arranger can be resolved by at least one honest arranger replica,
satisfying Property \PrAvailability.

\begin{lemma}~\label{lem:one-correct} For every legal batch tag posted
  by the arranger, at least one honest replica $\Rtt$ signed it and
  therefore $\Rtt$ can resolve its hash.
\end{lemma}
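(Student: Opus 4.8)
The plan is to leverage \PrLegality{} together with the definition of a certified batch tag, and then invoke \Cref{l:semiTranslate}-style reasoning about honest DAC replicas --- except that here, in the fully decentralized setting, honest arranger replicas \emph{are} honest DAC replicas, so the translation argument is internal. First I would recall that a legal batch tag is, by definition, certified, hence carries at least $f+1$ replica signatures. Since at most $f$ replicas are Byzantine (we inherit $f < N/3$ from SBC), the pigeonhole principle gives at least one honest replica $\Rtt$ among the signers. The crux is then to argue that whenever an honest replica signs a batch tag $(\<batchId>, h, \cdot)$, it has previously stored in its local map \<hashes> an entry mapping $(\<batchId>, h)$ to the batch $b$ with $\hash(b) = h$, and that this entry is never deleted.

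The key steps, in order: (1) Take a legal batch tag $(\<batchId>, \hash(b), \sigma)$ posted by the arranger; by \PrLegality{} it is legal, hence certified, hence signed by $\geq f+1$ replicas. (2) By the Byzantine bound, at least one signer $\Rtt$ is honest. (3) Inspect \AlgThreeSet{} (Alg.~\ref{alg:arranger-sbc}): an honest replica only signs a batch tag as part of processing a \(\<SetDeliver>\) event, and at that point it computes $b = \<tobatch>(S)$ for the consented set $S$, sets $h = \hash(b)$, and updates \<hashes> so that \<hashes>$(\<batchId>, h) = b$ \emph{before} broadcasting its signature. By \Cref{l:batch-epoch} the batch in the certified tag is exactly this $\<tobatch>(S)$, so the $h$ that $\Rtt$ signed is the same $\hash(b)$ appearing in the posted tag. (4) Since \<hashes> entries are never removed, a later invocation of $\Rtt.\<translate>(\<batchId>, h)$ finds $b$ in \<hashes> and returns it, with $\hash(b) = h$ by construction --- so $\Rtt$ resolves the hash.

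The main obstacle I anticipate is a subtle ordering issue: I must make sure an honest replica writes the \<hashes> entry \emph{before} it releases its signature, so that the signature can never be "in flight" for a batch the replica cannot yet translate. This amounts to a careful reading of the line ordering in \AlgThreeSet{}; assuming the algorithm updates \<hashes> before broadcasting (as the surrounding prose states --- "they update their local variable \<hashes>. After that, honest replicas sign the new batch tag and broadcast the signature"), the argument goes through cleanly. A secondary point to state precisely is that $\Rtt$'s stored batch agrees with the one in the posted tag even when a Byzantine replica re-posts the tag with a different signer set; this is handled by \Cref{l:batch-epoch} and \PrSBCAgreement{}, which pin down $b$ uniquely from $\<batchId>$. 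Combined with \Cref{lem:one-correct}, this yields \PrAvailability{} for the fully decentralized arranger, completing the correctness argument.
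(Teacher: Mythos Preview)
Your proposal is correct and follows essentially the same route as the paper: from the $f+1$ signatures you extract an honest signer, observe via Lemma~\ref{l:batch-epoch} and the structure of Alg.~\ref{alg:arranger-sbc} that honest replicas only sign after \(\<SetDeliver>\) has populated \(\<hashes>\), and conclude from the persistence of \(\<hashes>\) that \(\<translate>\) succeeds. Your extra care about the write-before-broadcast ordering and about uniqueness via \PrSBCAgreement{} is sound and slightly more explicit than the paper's argument, but not a different approach.
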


\begin{proof}
  Since legal batch tags have $f+1$ signatures, at least one honest
  replica signed it.
  By Lemma~\ref{l:batch-epoch}, all legal batch tags correspond to SBC
  consented sets.
  When SBC consent on a set, it triggers an event \(\<SetDeliver>\)
  signaling the arranger replica to add the hash to its local map
  $\<hashes>$.
  Upon request, function $\<translate>$ uses $\<hashes>$ to obtain the
  batch that corresponds to a given hash.
  Honest replicas never delete information from $\<hashes>$, so for
  every legal batch tag posted at least one of the honest replicas
  that signed the batch tag can resolve its hash.
\end{proof}

Lemma~\ref{lem:one-correct} guarantees that, when posting legal batch
tags, at least one honest replica had signed the batch, and this
replica will be capable of performing the translation.
Eventually, \emph{all} honest replicas will know the contents of every
consented set so they will be able to translate as well.
Similar to the clients in the semi-decentralized arranger from
Section~\ref{sec:seqDC}, clients in this fully decentralized system
are guaranteed that after contacting all signing replicas of a
certified batch tag, they will receive the batch of transaction
requests.
Additionally, they can also employ different strategies when
contacting arranger replicas to translate hashes.

Since, we have proved that our arranger satisfies all correctness
properties of an arranger, we can conclude:

\begin{theorem}
  Arrangers consisting of \(n\) replicas, with \(f < \frac{n}{3}\)
  being Byzantine and the remaining \(n-f\) being honest replicas
  implementing Alg.~\ref{alg:arranger-sbc}, are correct
  arrangers.
\end{theorem}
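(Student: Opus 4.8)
The plan is to unfold the definition of a \emph{correct arranger} from Section~\ref{sec:api-properties}---the conjunction of \PrLegality, \PrUniqueBatch, \PrTermination, and \PrAvailability---and discharge each conjunct using the lemmas already established for Alg.~\ref{alg:arranger-sbc}, under the SBC-inherited bound $f < n/3$. So the proof is essentially an assembly step: nothing new is proved, the work is in checking that the four lemmas of this section together cover exactly the four defining properties.

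First, for \PrLegality I would note that a certified batch tag carries $f+1$ signatures, hence at least one honest signer, so by Lemma~\ref{l:batch-epoch} its batch is $\<tobatch>(S)$ for $S$ the set decided in SBC round $\<batchId>$. Then \PrValidity follows from \PrSBCValidity (the decided set is a subset of the union of proposed sets and contains only valid elements, and honest replicas only feed into SBC transaction requests added through \<add>, which \<add> checks to be valid); \PrIntegrityOne follows because $S$ is a set and $\<tobatch>$ does not duplicate elements; and \PrIntegrityTwo follows from \PrSBCIntegrity applied together with Lemma~\ref{l:batch-epoch} to the earlier legal batch tag (using the ``log of previously decided elements'' modification noted in Section~\ref{sec:setconsensus}). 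For \PrUniqueBatch, given two certified batch tags $(\<id>,\<h>_1,\sigma_1)$ and $(\<id>,\<h>_2,\sigma_2)$, Lemma~\ref{l:batch-epoch} maps both to the set consented in SBC round $\<id>$, which by \PrSBCAgreement is the same set at all honest replicas; hence the two batches coincide and $\<h>_1 = \hash(b) = \<h>_2$.

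Next, for \PrTermination I would take a valid transaction request $\<tr>$ added to an honest replica, observe that \<add> inserts $\<tr>$ into that replica's SBC instance, and invoke the client contact strategy (reaching $f+1$ replicas, or the lightweight resend-and-observe protocol) so that $\<tr>$ becomes known and proposed by \emph{all} honest SBC replicas; \PrSBCCensorshipResistance then guarantees $\<tr>$ enters some consented set $S$, and Lemma~\ref{l:termination} guarantees $S$ is posted as a legal batch tag accepted by the \<logger> contract. Finally, \PrAvailability is exactly the statement of Lemma~\ref{lem:one-correct}: at least one honest replica signed the legal batch tag, stored the inverse translation in \<hashes> upon the corresponding \<SetDeliver>, and never deletes it, so it answers \<translate> correctly. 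Collecting \PrLegality, \PrUniqueBatch, \PrTermination, and \PrAvailability yields the theorem.

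I expect the only genuinely delicate point to be the termination argument, for two reasons already absorbed into Lemma~\ref{l:termination} and its surrounding discussion: the turn-taking schedule must eventually reach an honest replica that has \emph{already} collected $f+1$ signatures for the relevant round (which uses partial synchrony and sufficiently long time slices so that Byzantine replicas cannot stall progress indefinitely), and transaction dissemination must reach all honest SBC replicas rather than just one---this is precisely why the client-side contact strategy is needed and why \PrSBCCensorshipResistance, not merely \PrSBCValidity, is the property that must be invoked. Every other conjunct reduces to a direct citation of a lemma of this section combined with a single SBC property.
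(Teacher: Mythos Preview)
Your proposal is correct and follows exactly the paper's approach: the theorem is stated as an assembly step, with \PrLegality and \PrUniqueBatch obtained from Lemma~\ref{l:batch-epoch} together with \PrSBCValidity, \PrSBCIntegrity, and \PrSBCAgreement respectively, \PrTermination from \PrSBCCensorshipResistance plus Lemma~\ref{l:termination} (with the client contact discussion), and \PrAvailability from Lemma~\ref{lem:one-correct}. Your write-up is in fact more detailed than the paper's one-line conclusion, but the decomposition and the lemma citations are identical.
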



\section{Empirical Evaluation}
\label{sec:empirical}

We provide now an empirical evaluation to asses whether the
decentralized arranger from Section~\ref{sec:seq-decentralized} can
scale to handle the current demand of L2s and beyond.

\subsection{Introduction}

There are two elements to assess this scalability: (1) the number
blocks and size of each block that Set Byzantine Consensus can
achieve, and (2) the new computations required to implement the
algorithm in Section~\ref{sec:seq-decentralized}.

\subsubsection*{Setchain and Blockchain Scalability}
Several implementations of SBC exist, including
RedBelly~\cite{Crain2021RedBelly},
Setchain~\cite{capretto24improving}, and
ZLB~\cite{ranchalPedrosa2024ZLB}.
All these implementations report a throughput exceeding 12,000
transactions per second (TPS).
In comparison, the cumulative throughput of all Ethereum L2s amounts
to \LdosTPS{} TPS, \footnote{Data obtained on Feb. 2025 from
  l2beat \url{https://l2beat.com/scaling/activity}} which can be taken
as the baseline of L2s \emph{current demand}.\footnote{The available
  documentation of L2s does not present theoretical or empirical
  limitations.}

\subsubsection*{Evaluating Additional Building Blocks Locally}
Our approach extends SBC, and thus, we focus on assessing the
efficiency of each new component forming the decentralized arranger:
compressing, hashing, signing batches, verification and aggregation of
signatures, and translation of hashes into compressed batches.
We aim to evaluate empirically that these components impose a
negligible computational overhead on top of the intrinsic
set-consensus algorithm.
All these components run locally and are easily parallelizable.

\subsection{Empirical Evaluation}

\subsubsection*{Setup}
All experiments were carried out on a Linux machine with $256$~GB of
RAM and $72$~virtual 3GHz-cores (Xeon Gold~6154) running
Ubuntu~20.04.
We used a sequence of 40,000 real transaction requests from Arbitrum
One as our data set.

For the evaluation, we used Brotli~\cite{Alakuijala18brotli} as
compression algorithm and the root of Merkle trees~\cite{Merkle88} of
batches as hash function \(\<hash>\).
For signing, verifying and aggregating signatures, we employed the BLS
signature scheme implemented in~\cite{ArbitrumNitroGithub}.

\subsubsection*{Hypothesis and Experiments}
%
\newcommand{\Hypo}[1]{\textup{\textbf{#1}}\xspace}
\newcommand{\HSize}{\Hypo{H.Size}}
\newcommand{\HHash}{\Hypo{H.Hash}}
\newcommand{\HCompress}{\Hypo{H.Compress}}
\newcommand{\HSign}{\Hypo{H.Sign}}
\newcommand{\HAgg}{\Hypo{H.Agg}}
\newcommand{\HVer}{\Hypo{H.Ver}}
\newcommand{\HTrans}{\Hypo{H.Trans}}
%
We intend to answer empirically the following research question:
\emph{these building blocks impose a negligible computational overhead
  on top of the set-consensus algorithm.}

In other words, our hypothesis is that the performance offered by the
implementations of SBC is not affected by:
\begin{itemize}
\item(\HHash) hashing batches of transactions,
\item (\HCompress) compressing batches of transactions,
\item (\HSign) signing hashes of batches,
\item (\HAgg) aggregating signatures,
\item (\HVer) verifying signatures,
\item (\HTrans) translating hashes into batches.
\end{itemize}
Additionally, signed batch tags size is significantly smaller than
their corresponding compressed batch (hypothesis \HSize).
We report on each hypothesis separately.

\begin{itemize}

\item{\HSize.}
%
  To evaluate \HSize, we measured the size of both compressed batches
  and batch tags ranging from $400$ to 4,400 transactions per
  batch.\footnote{When we run the experiments Arbitrum Nitro batches
    typically contained around $800$ transaction requests, while in
    Arbitrum AnyTrust, batches contained approximately 4,000
    transaction requests. As of February 2025, Arbitrum Nitro batches
    typically contain around 3,000 transaction requests, while in
    Arbitrum AnyTrust, batches contain approximately $100$ transaction
    requests.}
For each batch size, we shuffled transaction requests from our data
set, split them into 10 batches of equal size, and assigned them
unique identifiers.
%
For the compression of each batch, we used the Brotli compression
algorithm~\cite{Alakuijala18brotli}, currently used in Arbitrum
Nitro~\cite{ArbitrumNitro}.
Additionally, for each batch, we computed its Merkle
Tree~\cite{Merkle88} with their transactions as leaves.
To create batch tags, we signed each batch Merkle Tree root along with
an identifier using a BLS Signature scheme~\cite{Boneh2001Short}
implemented in~\cite{ArbitrumNitroGithub}.
Our experiments show there is a ratio of \emph{more than two orders}
of magnitude between compressed batches and batch tags with the same
number of transactions.
Fig.~\ref{fig:size} shows the average size of compressed batches of
400 transactions is approximately 80,000~Bytes, whereas the
average size for 4,400 transactions is around 780,000~Bytes.
In contrast, the average size of signed batch tags remained around
$480$~Bytes independently of the number of transactions within the
batch.
Therefore, hypothesis \HSize holds.

\begin{figure}[h]
  \includegraphics[scale=0.23]{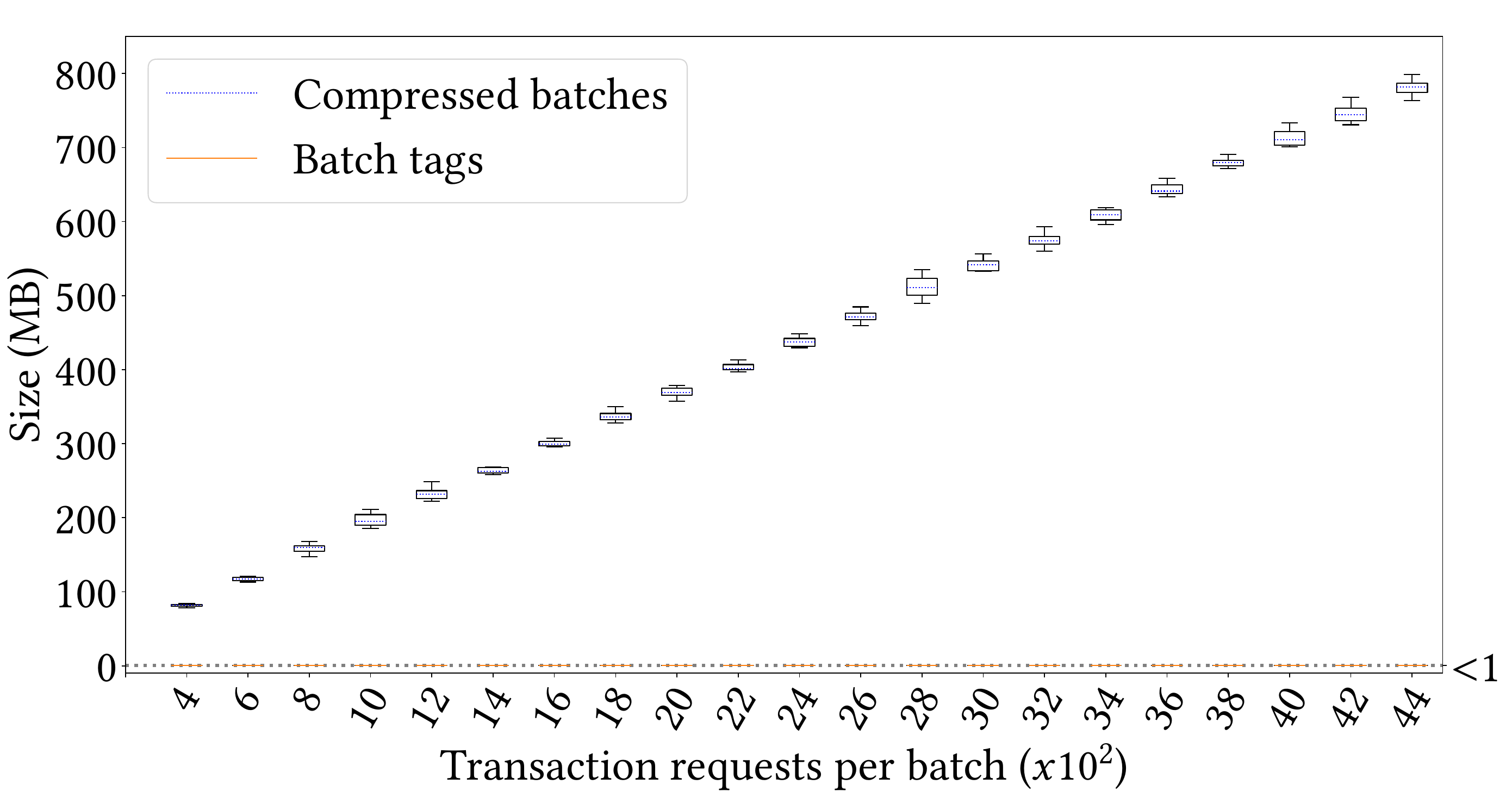}
\caption{Average size of compressed batches and batch tags for varying
  numbers of transaction requests per batch.}
 \label{fig:size}
\end{figure}
\end{itemize}

For the rest of the experiments, we loaded all the required
information into local memory and applied each procedure for a
duration of $1$~second, cycling over the input as much as necessary.
We ran each experiment $10$ times and report the average performance.
Fig.~\ref{fig:tps-hash-compress-translate}, Fig.~\ref{fig:aggregation},
and Fig.~\ref{fig:ver-parallel} summarize the results.

\begin{itemize}
\item{\HHash{} and \HCompress{}.}
To test compressing and hashing procedures, we use the full data set
of 40,000~transaction requests.
We again ranged from $400$ to 4,400 transaction requests per batch.
Fig.~\ref{fig:tps-hash-compress-translate} shows that hashing can
scale from 260,000 to 275,000~TPS, depending on the batch size,
while the compression procedure achieves a throughput ranging from
75,000~TPS for batches of $400$ transaction requests to 50,000~TPS
for batches of 4,400~transaction requests.
Both hashing and compression procedures significantly outperform the
SBC implementation throughput of 12,000~TPS, and thus, these
procedures do not impact the overall throughput.
Therefore, hypotheses \HHash and \HCompress hold.

\begin{figure}[h]
   \includegraphics[scale=0.34]{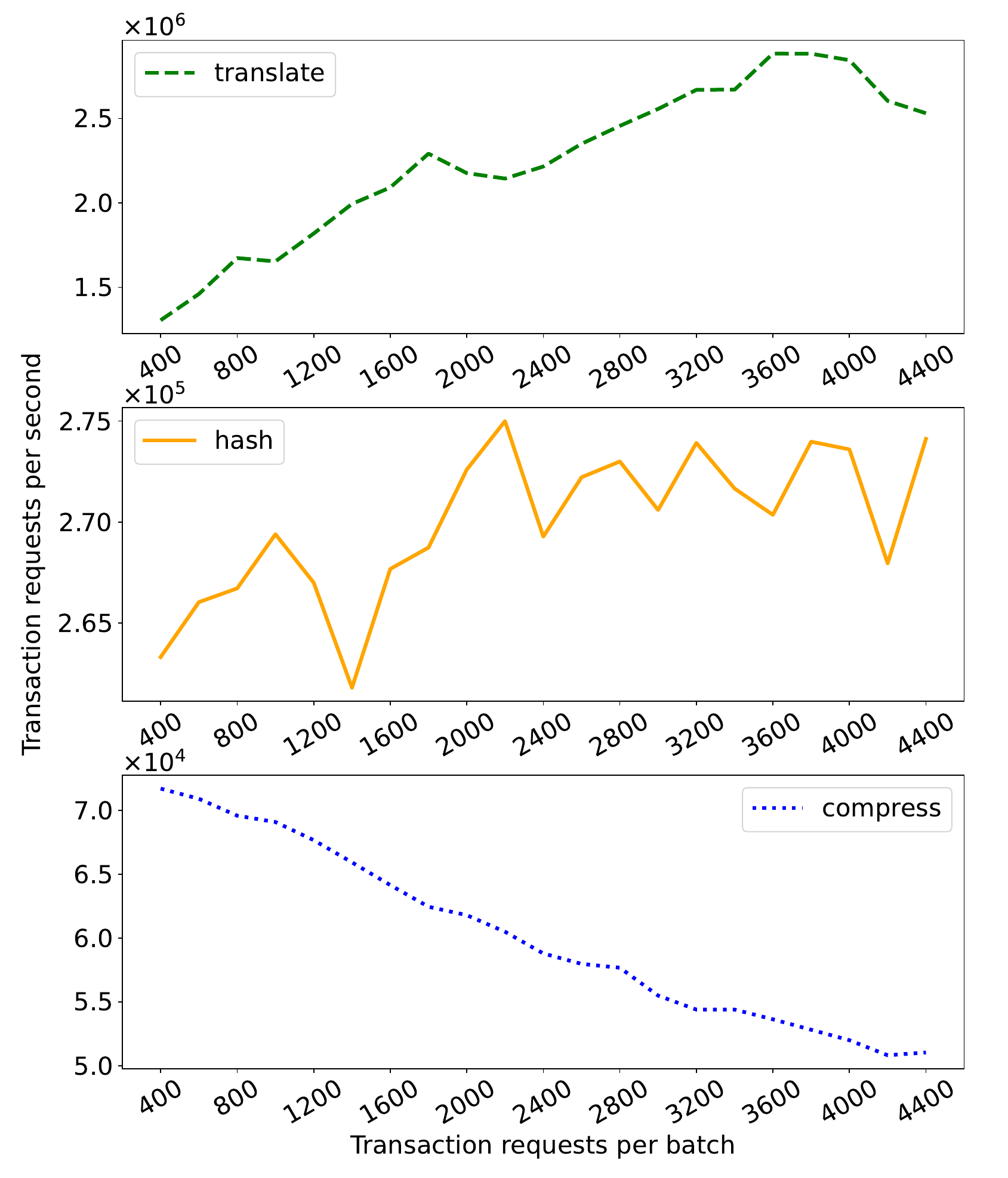}
   \caption{Throughput of compressing, hashing and translating
     procedures for varying number of transaction requests per batch.}
 \label{fig:tps-hash-compress-translate}
\end{figure}

\item{\HSign.}
To evaluate the signing procedure performance, we used a file
containing $50$ pairs batch hash-identifier.
We assess the performance of signing each pair using the provided
secret key with the BLS signature scheme implemented
in~\cite{ArbitrumNitroGithub}.
Our results show that our system can sign approximately 2,300
hash-identifier pairs per second.
Considering a throughput of 12,000~TPS and batches of at least $400$
transaction requests, no more than $30$ batches per second will be
generated.
Thus, each arranger replica can sign more than $75$ times the number
of batches required, empirically supporting hypothesis \HSign.

\item{\HAgg{}.}
The performance of the signature aggregation procedure depends on the
number of signatures aggregated, which is bounded by the number $N$ of
arranger replicas.
We studied the performance of signature aggregation ranging from \(8\)
to \(256\) replicas\footnote{As reference, currently, the DAC of
  Arbitrum AnyTrust comprises $7$
  members
  \url{https://docs.arbitrum.foundation/state-of-progressive-decentralization\#data-availability-committee-members}.}.
Our data set consists of a directory with $N$ files.
Each file contains $50$ signed hash-identifier pairs signed by the
same signer.
The list of pairs hash-identifier is the same across all files.
We measured the performance of aggregating $N$ signatures of the same
hash-identifier pair using BLS~\cite{ArbitrumNitroGithub}.
Fig.~\ref{fig:aggregation} shows an aggregation ratio ranging from
5,000~procedures per second, when aggregating $256$~signatures in
each invocation, to 115,000~per second when aggregating
$8$~signatures.
Assuming at most $30$ batches per second are generated, the
aggregation procedure can easily handle its workload, even for
$N = 256$ arranger replicas, empirically validating hypothesis \HAgg.

\begin{figure}[h]
  \includegraphics[scale=0.47]{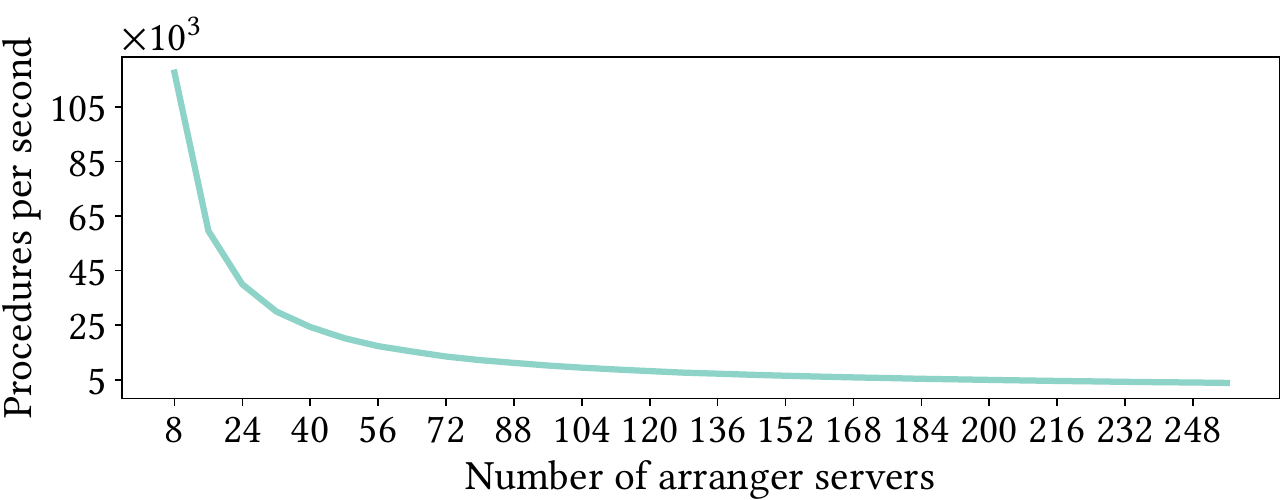}
  \caption{Signature aggregations per second, in terms of the number
    of signature aggregated per procedure.}
 \label{fig:aggregation}
\end{figure}

\item{\HVer{}.}
To assess the performance of the signature verification procedure, we
used a file containing $50$~batch tags and the public key of the
signer of all hashes.
We analyzed the performance of verifying that each signature was
generated by signing the corresponding hash-identifier pair using the
corresponding private keys.
Fig.~\ref{fig:ver-parallel} shows that signature verification can be
performed at a rate of approximately $600$~signatures per second.
This rate is sufficient to verify all signatures generated by $N=128$
replicas, considering a throughput of 12,000~TPS and batches
containing 4,400 transaction requests (dotted line in
Fig.~\ref{fig:ver-parallel}).
Replicas must verify the highest number of signatures when the
arranger is maintained by $256$ replicas and batches contain only
$400$ transaction requests, which amounts to verifying 7,680
signatures per second, surpassing the throughput of the signature
verification procedure (dashed line in Fig.~\ref{fig:ver-parallel}).
However, verifying signatures can be done in parallel because each
signature can verified independently from the other
signatures.\footnote{All other local procedures described can use
  parallelism easily as well.}

Our empirical evaluation suggests that using $16$~parallel processes
to verify signatures achieves a rate exceeding 8,500~signature
verified per second, sufficiently for $N=256$ replicas and batches
containing $400$ transaction replicas at 12,000~TPS.
Consequently, the verifying procedure does not influence the
throughput, and hypothesis \HVer holds.

\begin{figure}[h]
  \includegraphics[scale=0.47]{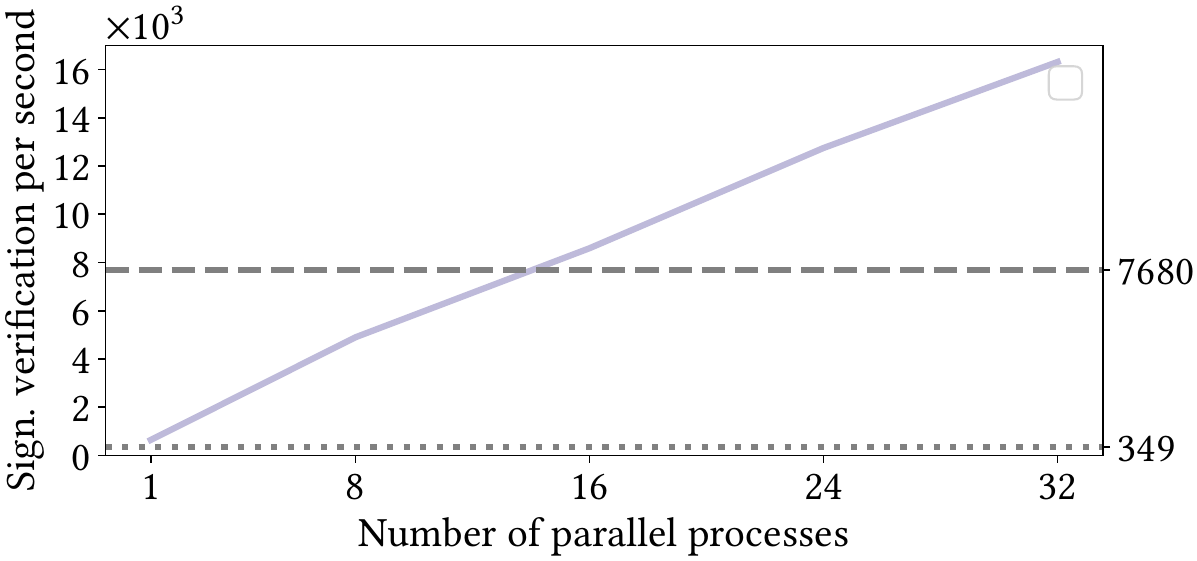}
  \caption{Signature verification per second.
    The dotted line
    represents the maximal number of signatures that each arranger
    replica must verify (for 12,000~TPS, the batch size of
    4,400 and $128$ replicas). The dashed line represents the maximal
    number of signatures that each arranger replica must verify (for
    12,000~TPS, batch size of $400$ and $256$ replicas).
  }
\label{fig:ver-parallel}
\end{figure}

\item{\HTrans{}.}
Finally, we implemented a \emph{translation server} that takes two
input files and generates a dictionary mapping hashes to their
compressed data.
One file contains hashed batches together with identifiers while the
other contains compressed batches with identifiers.
The translation server opens a local TCP connection port to receive
hash translation requests receiving an identifier and returning a
compressed batch according to the dictionary.

Our experiments involve making sequential requests to a single
translation server asking to translate input hashes for a duration of
$1$ second.
All batches contained the same number of transaction requests, ranging
from $400$ to 4,400 transactions.

Fig.~\ref{fig:tps-hash-compress-translate} shows that the translation
for batches of $400$ transactions can be performed at a rate of about
1,300,000~TPS.
The translation throughput increases as the number of transaction
requests per batch increases, peaking at nearly 2,900,000~TPS for
batches of 3,800~transactions.
However, the throughput decreases slightly after that point, due to
the increased size of compressed batches.
The translation procedure can easily handle two orders of magnitude
more transaction requests than the throughput of implementation of
SBC, confirming hypothesis~\textbf{\HTrans{}}.
\end{itemize}

\subsection{Experiment Conclusions}
The conclusions of our experiments are the following:
\begin{enumerate}
  \item hashing and
signing batches of transaction requests take significantly less space
than compressing batches of transaction requests, and
\item local
procedures are very efficient and do not limit the throughput provided
SBC implementations, which is the main bottleneck for scalability.
\end{enumerate}

Since the overhead of the procedures added to the set-consensus
algorithm is negligible, our decentralized arranger does not introduce
any significant performance bottlenecks to SBC implementations, even
if the workload of L2 systems was increased by two orders of magnitude
and there were 256 arranger replicas.

Finally the reduction in space, which remains constant as batch size
increases, contributes to reduce the L1 gas needed.



\section{Related Work}
\label{sec:related-work}

While the limitations of centralized arrangers in L2s are known, the
development of decentralized arrangers is still in its early
stages~\cite{motepalli2023sok}.

Many L2s implement an arranger as centralized sequencer and separate
data availability committee~\cite{l2beat}, as the one we presented in
Section~\ref{sec:seqDC}.
However, to the best of our knowledge, Metis~\cite{metisDecentralized}
is the only L2 that currently implements a decentralized arranger.
Metis is based on proof-of-stake and uses
Tendermint~\cite{buchman2016tendermint} to select a rotating leader.
The leader collects transaction requests and generates batches.
These batches are signed using multi-party computation and, when
enough signatures are gathered, the leader posts a signed batch in L1.
However, Metis does not provide a proof of correctness.
This is in contrast to our fully decentralized arrange, presented in
Section~\ref{sec:seq-decentralized}, which we proved correct under the
assumption that less than one-third of the replicas are Byzantine.
%

Other attempts to implement arrangers include Radius~\cite{Radius},
Espresso~\cite{espressoSequencer} and Astria~\cite{Astria}.
Radius also uses leader election, based on the RAFT
algorithm~\cite{ongaro2014raft}, which is not Byzantine-tolerant.
Radius replicas remain consistent and can reach consensus even when the
leader fails, but~\cite{Radius} does not discuss what happens when
replicas refuse to disclose block contents to users.
Radius allows encryption of transactions to prevent maximum
extractable value (MEV), but we do not address this problem here.
Both Espresso and Astria use one protocol for sequencing transaction
requests and a different protocol for data availability.
In particular, Espresso uses HotStuff2~\cite{hotstuff} as consensus
mechanism and its own Data Availability layer known as
\emph{tiramisu}, while Astria uses CometBFT~\cite{cometBFT} for
sequencing transaction requests and leverages Celestia Data
Availability Service~\cite{celestia}.
We tackled both challenges in a single protocol because
Byzantine-resilient solutions do not compose efficiently (see
e.g.~\cite{capretto24improving}).

Finally, Malachite~\cite{malachite} is a flexible Byzantine
Fault-Tolerant engine currently under development aimed at
decentralizing only the sequencer component of arrangers.
Malachite could be directly applied to both Optimistic Rollups and ZK
Rollups, (where batches are posted compressed and a DAC is not
required to translate hashes) or combined witha DAC service to obtain
an arranger.


\section{Conclusion}
\label{sec:conclusion}

Layer 2 Blockchains aim to improve the scalability of current smart
contract based blockchains, offering a much higher throughput without
modifying the programming logic and user interaction.
Two crucial elements of all L2s schemes are the \emph{sequencer},
which receives and orders transaction requests from users, packs them
into batches and sends the hash of the batch to L1, and the \emph{data
  availaibility committe} which make sure that the data corresponding
to the hash is available.
We introduced the notion of \emph{arranger} which combines the
sequencer and DAC in a single service.
Current implementations of arrangers in most L2s are based on
centralized sequencers, either posting compressed batches~(ZK-Rollups
and Optimistic Rollups), or hashes of batches with a fixed collection
of servers providing data translation~(Validiums and Optimiums).
The resulting L2s are fully controlled by the centralized sequencer,
that has full power for censoring transactions and the ability to post
fake batches without penalties.

In this paper, we rigorously defined the correctness criteria for
arrangers and presented a fully decentralized arranger
proof-of-concept based on the efficient Set Byzantine Consensus
protocol.
We showed our solution is correct when the portion of Byzantine
replicas is less than one third.
We implemented all building blocks required to extend SBC into an
honest arranger replicas, and presented an empirical evaluation that
shows that the overhead will not reduce the throughput of SBC
implementations to handle several times the current demand.

\subsection*{Future work}

Future work includes extending a production SBC implementation
transforming it into a decentralized arranger implementation, allowing
all components to run together and enabling a stronger empirical
evaluation and study other factors, such as latency.

Throughout the paper, we assumed that the number of Byzantine arranger
replicas is bounded by \(f\), a known fraction of the total number of
arranger replicas.
However, we did not explore what the incentive for an arranger
replica to be honest or Byzantine is.
This requires both (1) a system of rewards for replicas that make the
L2 blockchain progress (for example, extending the assignment of
rewards to STFs in optimistic rollups) and (2) a collection of new
fraud-proof games to punish replicas that are proven to behave
dishonestly.
This requires that replicas place a stake to account for the
penalties, which is recovered after claims consolidate.
An interesting avenue for future work is to study incentives and
punishments to align the behavior of rational arranger with the honest
arranger replicas, creating a system that is \emph{incentive
  compatible}~\cite{Ledyard1989}.


\vfill
\clearpage

\bibliographystyle{splncs04}
\bibliography{bibfile}

\begin{thebibliography}{10}
\providecommand{\url}[1]{\texttt{#1}}
\providecommand{\urlprefix}{URL }
\providecommand{\doi}[1]{https://doi.org/#1}

\bibitem{linea}
{A Consensys Formation}: Linea, \url{https://linea.build/}, Accessed:
  2024-05-21

\bibitem{Alakuijala18brotli}
Alakuijala, J., Farruggia, A., Ferragina, P., Kliuchnikov, E., Obryk, R.,
  Szabadka, Z., Vandevenne, L.: Brotli: A general-purpose data compressor. ACM
  Transactions on Information Systems  \textbf{37}(1),  1--30 (2018)

\bibitem{Boneh2001Short}
Boneh, D., Lynn, B., Shacham, H.: Short signatures from the weil pairing.
  Journal of Cryptology  \textbf{17},  297--319 (2001)

\bibitem{ArbitrumNitro}
Bousfield, L., Bousfield, R., Buckland, C., Burgess, B., Colvin, J., Felten,
  E., Goldfeder, S., Goldman, D., Huddleston, B., Kalonder, H., Lacs, F., Ng,
  H., Sanghi, A., Wilson, T., Yermakova, V., Zidenberg, T.: Arbitrum nitro: A
  second-generation optimistic rollup (2022),
  \url{https://github.com/OffchainLabs/nitro/blob/master/docs/Nitro-whitepaper.pdf}

\bibitem{buchman2016tendermint}
Buchman, E.: Tendermint: Byzantine fault tolerance in the age of blockchains.
  Master's thesis, The University of Guelph (2016)

\bibitem{buterin17meaning}
Buterin, V.: The meaning of decentralization (2017),
  \url{https://medium.com/@VitalikButerin/the-meaning-of-decentralization-a0c92b76a274}

\bibitem{capretto24improving}
Capretto, M., Ceresa, M., Fndez~Anta, A., Russo, A., S\'{a}nchez, C.: Improving
  blockchain scalability with the setchain data-type. Distrib. Ledger Technol.
  \textbf{3}(2) (Jun 2024). \doi{10.1145/3626963},
  \url{https://doi.org/10.1145/3626963}

\bibitem{Castro2002Practical}
Castro, M., Liskov, B.: Practical byzantine fault tolerance and proactive
  recovery. ACM Trans. Comput. Syst.  \textbf{20}(4),  398–461 (Nov 2002).
  \doi{10.1145/571637.571640}, \url{https://doi.org/10.1145/571637.571640}

\bibitem{base}
{Coinbase}: {base}, \url{https://base.org/}, Accessed: 2024-05-21

\bibitem{cometBFT}
CometBFT: What is {CometBFT},
  \url{https://docs.cometbft.com/v0.37/introduction/}, Accessed: 2024-05-21

\bibitem{Crain2021RedBelly}
Crain, T., Natoli, C., Gramoli, V.: Red belly: A secure, fair and scalable open
  blockchain. In: Proc. of S\&P'21. pp. 466--483 (2021).
  \doi{10.1109/SP40001.2021.00087}

\bibitem{Croman2016ScalingDecentralizedBlockchain}
Croman, K., Decker, C., Eyal, I., Gencer, A.E., Juels, A., Kosba, A., Miller,
  A., Saxena, P., Shi, E., G{\"u}n~Sirer, E., Song, D., Wattenhofer, R.: On
  scaling decentralized blockchains. In: Financial Crypto. and Data Security.
  pp. 106--125. Springer (2016)

\bibitem{metisDecentralized}
Documentation, M.D.: Decentralized sequencer,
  \url{https://docs.metis.io/dev/decentralized-sequencer/overview}, Accessed:
  2024-05-21

\bibitem{Dwork1988Consensus}
Dwork, C., Lynch, N., Stockmeyer, L.: Consensus in the presence of partial
  synchrony. J. ACM  \textbf{35}(2),  288–323 (Apr 1988).
  \doi{10.1145/42282.42283}, \url{https://doi.org/10.1145/42282.42283}

\bibitem{fraxtal}
{Fraxtal}: {Frax Finance}, \url{https://www.frax.com/}, Accessed: 2024-05-21

\bibitem{gilad2017algorand}
Gilad, Y., Hemo, R., Micali, S., Vlachos, G., Zeldovich, N.: Algorand: Scaling
  byzantine agreements for cryptocurrencies. In: Proceedings of the 26th
  Symposium on Operating Systems Principles. p. 51–68. SOSP '17, Association
  for Computing Machinery, New York, NY, USA (2017).
  \doi{10.1145/3132747.3132757}, \url{https://doi.org/10.1145/3132747.3132757}

\bibitem{immutablex}
{Immutable X}, \url{https://www.immutable.com/}, Accessed: 2024-05-21

\bibitem{malachite}
{Informal Systems}: {Malachite - BFT Consensus Engine},
  \url{https://github.com/informalsystems/malachite}, Accessed: 2025-02-25

\bibitem{Kotla2010Zyzzyva}
Kotla, R., Alvisi, L., Dahlin, M., Clement, A., Wong, E.: Zyzzyva: Speculative
  byzantine fault tolerance. ACM Trans. Comput. Syst.  \textbf{27}(4) (Jan
  2010). \doi{10.1145/1658357.1658358},
  \url{https://doi.org/10.1145/1658357.1658358}

\bibitem{celestia}
Labs, C.: Celestia, \url{https://celestia.org}, Accessed: 2024-05-21

\bibitem{ArbitrumNitroGithub}
Labs, O.: Arbitrum nitro (2022), \url{https://github.com/OffchainLabs/nitro}

\bibitem{Lamport1982Byzantine}
Lamport, L., Shostak, R., Pease, M.: The byzantine generals problem. ACM Trans.
  Program. Lang. Syst.  \textbf{4}(3),  382–401 (Jul 1982).
  \doi{10.1145/357172.357176}, \url{https://doi.org/10.1145/357172.357176}

\bibitem{Ledyard1989}
Ledyard, J.O.: Incentive Compatibility, pp. 141--151. Palgrave Macmillan UK,
  London (1989)

\bibitem{hotstuff}
Malkhi, D., Nayak, K.: Extended abstract: Hotstuff-2: Optimal two-phase
  responsive bft. Cryptology ePrint Archive, Paper 2023/397 (2023),
  \url{https://eprint.iacr.org/2023/397},
  \url{https://eprint.iacr.org/2023/397}

\bibitem{mantle}
{Mantle}, \url{https://www.mantle.xyz/}, Accessed: 2024-05-21

\bibitem{zksyncera}
{Matter Labs}: {zkSync}, \url{https://zksync.io/}, Accessed: 2024-05-21

\bibitem{Merkle88}
Merkle, R.C.: A digital signature based on a conventional encryption function.
  In: Pomerance, C. (ed.) Advances in Cryptology --- CRYPTO '87. pp. 369--378.
  Springer Berlin Heidelberg, Berlin, Heidelberg (1988)

\bibitem{metis}
{Metis}: {Metis Andromeda}, \url{https://www.metis.io/}, Accessed: 2024-05-21

\bibitem{motepalli2023sok}
Motepalli, S., Freitas, L., Livshits, B.: Sok: Decentralized sequencers for
  rollups (2023)

\bibitem{nakamoto06bitcoin}
Nakamoto, S.: {B}itcoin: a peer-to-peer electronic cash system (2009)

\bibitem{Astria}
Nandini, E.: Introducing the astria development cluster,
  \url{https://www.astria.org/blog/introducing-the-astria-development-cluster},
  Accessed: 2024-05-21

\bibitem{xlayer}
{OKX}: {X Layer}, \url{https://www.okx.com/xlayer}, Accessed: 2024-05-21

\bibitem{ongaro2014raft}
Ongaro, D., Ousterhout, J.: In search of an understandable consensus algorithm.
  In: 2014 USENIX Annual Technical Conference (USENIX ATC 14). pp. 305--319.
  USENIX Association, Philadelphia, PA (Jun 2014),
  \url{https://www.usenix.org/conference/atc14/technical-sessions/presentation/ongaro}

\bibitem{optimism}
{Optimism Foundation}: {Optimism}, \url{https://www.optimism.io/}, Accessed:
  2024-05-21

\bibitem{Radius}
Radius: Introduction to radius,
  \url{https://docs.theradius.xyz/overview/introduction-to-radius}, Accessed:
  2024-05-21

\bibitem{ranchalPedrosa2024ZLB}
Ranchal{-}Pedrosa, A., Gramoli, V.: {ZLB:} {A} blockchain to tolerate colluding
  majorities. In: 54th Annual {IEEE/IFIP} International Conference on
  Dependable Systems and Networks, {DSN} 2024, Brisbane, Australia, June 24-27,
  2024. pp. 209--222. {IEEE} (2024). \doi{10.1109/DSN58291.2024.00032},
  \url{https://doi.org/10.1109/DSN58291.2024.00032}

\bibitem{espressoSequencer}
Sequencer, E.: The espresso sequencer: Hotshot consensus and tiramisu data
  availability,
  \url{https://github.com/EspressoSystems/HotShot/blob/main/docs/espresso-sequencer-paper.pdf},
  \url{https://webassembly.org/}, Accessed: 2024-05-21

\bibitem{sophon}
{Sophon}: {Sophon}, \url{https://sophon.xyz/}, Accessed: 2025-02-25

\bibitem{starknet}
{Starknet}, \url{https://www.starknet.io/en}, Accessed: 2024-05-21

\bibitem{l2beat}
research team, L.: L2beat, \url{https://l2beat.com/scaling/summary}, accessed:
  2024-05-21

\bibitem{Tyagi@BlockchainScalabilitySol}
Tyagi, S., Kathuria, M.: Study on Blockchain Scalability Solutions, p.
  394–401. ACM (2021), \url{https://doi.org/10.1145/3474124.3474184}

\bibitem{wood2014ethereum}
Wood, G.: Ethereum: A secure decentralised generalised transaction ledger.
  Ethereum project yellow paper  \textbf{151},  1--32 (2014)

\end{thebibliography}

\end{document}